\documentclass[11pt]{llncs}

\sloppy
\usepackage{a4wide}

\usepackage{showlabels}

\usepackage{amssymb}
\usepackage{bm}
\usepackage{latexsym}
\usepackage{eepic}
\usepackage{epic}
\usepackage{epsf}
\usepackage{graphics}

\pagenumbering{arabic}



\spnewtheorem{Claim}[theorem]{Claim}{\bfseries}{\itshape}
\spnewtheorem{Lemma}[theorem]{Lemma}{\bfseries}{\itshape}
\spnewtheorem{Prop}[theorem]{Proposition}{\bfseries}{\itshape}
\spnewtheorem{Coro}[theorem]{Corollary}{\bfseries}{\itshape}
\spnewtheorem{Defi}[theorem]{Definition}{\bfseries}{\itshape}
\spnewtheorem{Remark}[theorem]{Remark}{\bfseries}{\itshape}

\newcommand{\bproof}{\noindent{\it Proof}}
\newcommand{\cproof}{\noindent{\it Proof of Claim}}
\newcommand{\eproof}{\hspace*{\fill}$\rule{2mm}{2mm}$~~~~~\bigskip}
\renewenvironment{proof}{\bproof. }{\eproof}
\newenvironment{psketch}{\bproof~{\it Sketch}. }{\eproof}

\newcommand{\mon}{\mbox{mon}}

\newcommand{\var}{\mbox{var}}
\newcommand{\Depth}{\mbox{Depth}}
\newcommand{\Width}{\mbox{Width}}
\newcommand{\NCDepth}{\mbox{ncDepth}}
\newcommand{\NCWidth}{\mbox{ncWidth}}

\newcommand{\NP}{\mbox{\rm NP}}

\newcommand{\poly}{\mbox{\rm poly}}

\newcommand{\NEXP}{\mbox{\rm NEXP}}
\newcommand{\NSUBEXP}{\mbox{\rm NSUBEXP}}

\newcommand{\Q}{\mathbb{Q}}
\newcommand{\Z}{\mathbb{Z}}
\newcommand{\real}{\mathbb{R}}
\newcommand{\nat}{\mathbb{N}}

\newcommand{\F}{\ensuremath{\mathbb{F}}}

\newcommand{\pder}[3]{\ensuremath{\partial^{#1}{#2} \over
  \partial{#3^{#1}}}}

\newcommand{\x}{\ensuremath{\overline{x}}}

\newcommand{\y}{\ensuremath{\overline{y}}}

\renewcommand{\angle}[1]{\langle #1\rangle}

\newcommand{\N}{\mbox{\rm N}}

\renewcommand{\P}{\mbox{\rm P}}

\renewcommand{\angle}[1]{\langle #1\rangle}

\title{On Lower Bounds for Constant Width Arithmetic Circuits}

\author{V.~Arvind, Pushkar S. Joglekar, Srikanth Srinivasan}

\institute{
  Institute of Mathematical Sciences\\
   C.I.T Campus,Chennai  600 113, India\\
   \email{\{arvind,pushkar,srikanth\}@imsc.res.in}
}

\pagestyle{plain}


\begin{document}

\maketitle

\begin{abstract}
  The motivation for this paper is to study the complexity of
  constant-width arithmetic circuits. Our main results are the
  following.
\begin{enumerate}
\item For every $k>1$, we provide an explicit polynomial that can be
	computed by a linear-sized monotone circuit of width $2k$ but has no
	subexponential-sized monotone circuit of width $k$. It follows, from
	the definition of the polynomial, that the constant-width and the
	constant-depth hierarchies of monotone arithmetic circuits are
	infinite, both in the commutative and the noncommutative settings.
\item We prove hardness-randomness tradeoffs for identity testing
  constant-width commutative circuits analogous to \cite{KI03,DSY}.
\end{enumerate}
\end{abstract}

\section{Introduction}

Using a rank argument, Nisan, in a seminal paper \cite{N91}, showed 
exponential size lower bounds for noncommutative formulas (and
noncommutative algebraic branching programs) that compute the
noncommutative permanent or determinant polynomials in the ring
$\F\angle{X}$, where $X=\{x_1,\cdots,x_n\}$ are noncommuting
variables.

By Ben-Or and Cleve's result \cite{BC}, we know that bounded-width
arithmetic circuits (both commutative and noncommutative) are at least
as powerful as formulas (indeed width three is sufficient). Can we
extend Nisan's lower bound arguments to prove size lower bounds for
\emph{noncommutative} bounded-width circuits?
Motivated by this question we make some simple motivating
observations in this section. We first recall some basic
definitions.



\begin{definition}\cite{N91,RS05}
  An \emph{Algebraic Branching Program} (ABP) over a field $\F$ and
  variables $x_1,x_2,\cdots,x_n$ is a \emph{layered} directed acyclic
  graph with one \emph{source} vertex of indegree zero and one
  \emph{sink} vertex of outdegree zero. Let the layers be numbered
  $0,1,\cdots,d$. Edges only go from layer $i$ to $i+1$ for each
  $i$. The source and sink are the unique layer $0$ and layer $d$
  vertices, respectively. Each edge in the ABP is labeled with a
  linear form over $\F$ in the input variables. The size of the ABP is
  the number of vertices. Each source to sink path in the ABP computes
  the product of the linear forms labeling the edges on the path, and
  the sum of these polynomials over all source to sink paths is the
  polynomial computed by the ABP.
\end{definition}

The scalars in an ABP can come from any field $\F$. If the input
variables $X=\{x_1,x_2,\cdots,x_n\}$ are noncommuting then the ABP (or
circuit) computes a polynomial in the free noncommutative ring
$\F\angle{X}$. If the variables are commuting then the polynomial
computed is in the ring $\F[X]$.

\begin{definition}\label{def_arith_ckt}
  An arithmetic circuit over $\F$ and variables $x_1,x_2,\cdots,x_n$
  is a directed acyclic graph with each node of indegree zero labeled
  by a variable or a scalar constant. Each internal node $g$ of the
  DAG is labeled by $+$ or $\times$ (i.e.\ it is a plus or multiply gate)
  and is of indegree two. A node of the DAG is designated as the
  output gate.  Each internal gate of the arithmetic circuit computes
  a polynomial (by adding or multiplying its input polynomials). The
  polynomial computed at the output gate is the polynomial computed by
  the circuit. The circuit is said to be \emph{layered} if its
  vertices are partitioned into vertex sets $V_1\cup V_2\cup\ldots\cup
  V_t$ such that $V_1$ consists only of leaves, and given any internal
  node $g$ in $V_i$ for $i>1$, the children of $g$ are either nodes
  from $V_1$ (consisting of constants or variables) or nodes from the
  set $V_{i-1}$. The size of a circuit is the number of nodes in it,
  and the width of a layered circuit is $\max_{i>1}|V_i|$.  An
  arithmetic circuit over the field $\real$ is \emph{monotone} if all
  the scalars used are nonnegative. Finally, a layered arithmetic
	circuit is \emph{staggered} if, in each layer $i$ with $i>1$, every
	node except possibly one is a product gate of the form $g = u \times 1$,
	for some gate $u$ from the previous layer.
\end{definition}

Note that the notion of bounded (i.e, constant) width staggered
circuits of width $w$ is identical to the notion of a straight-line
program with $w$ registers. The following lemma shows that staggered
circuits of width $w$ are comparable in power to width $w-1$ (not
necessarily staggered) arithmetic circuits. It holds in the
commutative and the noncommutative settings. We postpone the
proof of the lemma to the Appendix.

\begin{lemma}
	\label{lemma_staggered}
	Given any layered arithmetic circuit $C$ of width $w$ and size $s$
	computing a polynomial $p$, there is a staggered arithmetic circuit
	$C'$ of width at most $w+1$ and size $O(ws)$ computing the same
	polynomial.
\end{lemma}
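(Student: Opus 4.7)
The plan is to simulate $C$ by $C'$ in a register-machine fashion: the $w+1$ gates in each layer of $C'$ will play the role of $w+1$ registers, with exactly one register performing a genuine arithmetic operation in each layer while the remaining $w$ carry their contents forward via $u\times 1$ product gates. For each layer $V_i$ of $C$ with $i>1$, I will have $C'$ contain $w_i := |V_i|$ consecutive sub-layers, each computing one specific gate of $V_i$. The main obstacle is to show that $w+1$ registers actually suffice, i.e.\ that the simulation can be scheduled.

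I would isolate the obstacle as a scheduling claim: for each layer $V_i$, there is an ordering $g_{\sigma(1)},\ldots,g_{\sigma(w_i)}$ of its gates such that, writing $L_j$ for the set of gates of $V_{i-1}$ still needed by some $g_{\sigma(k)}$ with $k>j$, the invariant $|L_j|+j\le w+1$ holds for every $j$. The left-hand side is exactly the number of values that must be kept simultaneously---the old values still used as operands, plus the new gates already produced. I would prove this by induction on $j$, picking $g_{\sigma(j+1)}$ greedily. Either some unprocessed gate is the last user of a currently live old value, in which case processing it decreases $|L_j|$ by at least one while $j$ increases by one and the invariant is preserved; or every live old value has at least two remaining users, and a double-counting argument against the indegree-$2$ bound on arithmetic gates forces $|L_j|\le w_i-j\le w-j$, leaving enough slack for any choice. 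The base case $|L_0|\le|V_{i-1}|\le w$ is immediate.

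With the schedule in hand, each sub-layer of $C'$ is filled in mechanically: the single ``real'' gate computes $g_{\sigma(j+1)}$ using operands from the previous sub-layer (which the invariant guarantees are present) or directly from $V_1$, and the remaining $w$ gates are $\times 1$ copies forwarding the other currently live values, with arbitrary duplicates used as padding whenever fewer than $w$ values need to be preserved. Declaring the descendant of $C$'s output gate to be the output of $C'$ preserves the computed polynomial. By construction $C'$ is staggered of width $w+1$, and its total size is $\sum_{i>1} w_i(w+1) = (w+1)s = O(ws)$, as required.
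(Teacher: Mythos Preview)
Your proof is correct and takes a genuinely different route from the paper's. Both share the top-level framework---simulate each layer of $C$ by a block of staggered sub-layers using $w+1$ registers---but the scheduling arguments differ. The paper builds an auxiliary multigraph on $V_{i-1}$ whose edges are the gates of $V_i$, then orders the edges by processing connected components one at a time, repeatedly deleting a non-cut edge (or a leaf edge once the component has become a tree); the invariant maintained is that at most one non-trivial component survives, and the register bound $\max\{|V(G)|,|E(G)|+1\}\le w+1$ is read off from that. Your greedy last-user rule together with the fan-in-$2$ double count is more elementary: it bypasses the graph-theoretic machinery entirely and reduces the scheduling to a one-line inductive invariant $|L_j|+j\le w+1$. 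The paper's picture is perhaps more structural, but for this statement your argument is shorter and equally tight.

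One small point: your base case $|L_0|\le |V_{i-1}|\le w$ uses the width bound on $V_{i-1}$, which is only guaranteed for $i>2$ (the width is $\max_{i>1}|V_i|$, so $|V_1|$ is unconstrained). For $i=2$ all inputs come from $V_1$ and are directly accessible, as you yourself note later, so that block is trivial; the paper singles this case out explicitly, and you should too.
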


A seminal result in the area of bounded width circuits is due to
Ben-Or and Cleve \cite{BC} where they show that size $s$ arithmetic
formulas computing a polynomial in $\F[X]$ (or in $\F\angle{X}$ in the
noncommutative case) can be evaluated by staggered arithmetic circuits of width
three and size $O(s^2n)$. Bounded width circuits have also been
studied under various restrictions in \cite{LMR07,MR08,JR09}.
However, they have not considered the question of proving explicit
lower bounds.

What is the power of arithmetic circuits of width $2$? It is easy to
see that the width-two circuit model is universal. We state this
(folklore) observation.

\begin{proposition}
  Any polynomial of degree $d$ with $s$ monomials in
  $\F[x_1,x_2,\cdots,x_n]$ (or in $\F\angle{x_1,\cdots,x_n}$) can be
  computed by a width two arithmetic circuit of size $O(d\cdot s)$.
  Furthermore, any monotone polynomial (i.e, with non-negative real
  coefficients) can be computed by a width two monotone circuit over
  $\real$ of size $O(d\cdot s)$.
\end{proposition}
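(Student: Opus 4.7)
The plan is to build a width-two circuit that uses its two register-nodes in complementary roles: one register $A$ will accumulate the running sum of the monomials already computed, while the second register $B$ will build up one monomial at a time, variable by variable. This is the natural straight-line program analogue of Horner's evaluation scheme, adapted so that each layer modifies both registers.

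Enumerate the monomials as $m_1, m_2, \ldots, m_s$, where $m_j = c_j \cdot x_{i_{j,1}} x_{i_{j,2}} \cdots x_{i_{j,d_j}}$ with $d_j \le d$; pad shorter monomials by multiplying with $1$'s so that every monomial uses exactly $d$ factors. The leaf layer $V_1$ is populated with all variables $x_1,\ldots,x_n$, all coefficient constants $c_j$, and the constants $0$ and $1$. For each monomial $m_j$, I allocate a block of $d$ successive layers. In the first layer of block $j$, set $B \leftarrow c_j \cdot x_{i_{j,1}}$ (a product of two leaves from $V_1$) and $A \leftarrow A_{\mathrm{prev}} + B_{\mathrm{prev}}$, which flushes the completed previous monomial into the running sum. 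In each of the subsequent $d-1$ layers within block $j$, I compute $B \leftarrow B_{\mathrm{prev}} \cdot x_{i_{j,t}}$ (a product of a previous-layer node and a leaf), and keep $A$ alive by the pass-through gate $A \leftarrow A_{\mathrm{prev}} + 0$ (or $A_{\mathrm{prev}} \cdot 1$). After the block for $m_s$ ends, one final layer sets the output node $A + B$. The first block needs trivial initialization (take $A \leftarrow 0 + 0$ in block $1$'s opening layer), which does not change the asymptotics.

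The correctness invariant is: at the end of block $j$, register $B$ holds $m_j$ and register $A$ holds $m_1 + \cdots + m_{j-1}$. The inductive step is immediate from the construction, and the final output gate therefore computes $\sum_{j=1}^s m_j$. Every internal node has indegree $2$, with children either drawn from the previous layer or from $V_1$, as required by Definition~\ref{def_arith_ckt}. The width is $2$ and the number of layers is $ds + O(1)$, giving total size $O(ds)$.

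For the monotone claim, note that all constants used in the construction are either the non-negative monomial coefficients $c_j$ (which are non-negative by hypothesis when the target polynomial is monotone) or the constants $0$ and $1$. Hence the same circuit is a monotone circuit over $\real$ whenever the polynomial is monotone. There is no real obstacle here; the only care needed is the formal requirement that every internal gate has indegree exactly two, which forces the explicit pass-through $A \leftarrow A + 0$ gates and the initialization, and this is handled by the bookkeeping above.
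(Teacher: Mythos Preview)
Your argument is correct. The paper states this proposition as a folklore observation and gives no proof, so there is nothing to compare against; your two-register construction (one accumulator, one monomial-builder) is exactly the standard argument one would expect, and your bookkeeping about initialization, pass-through gates, and monotonicity is in order.
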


\subsection*{Some Observations}

To motivate the study of constant-width circuits, we point out that,
for the problem of proving lower bounds for noncommutative bounded
width circuits, Nisan's rank argument is not useful.  For the
noncommutative ``palindromes'' polynomial
$P(x_0,x_1)=\sum_{w\in\{x_0,x_1\}^n} ww^R$, the communication matrix
$M_n(P)$ is of rank $2^n$ and hence any noncommutative ABP for it is
exponentially large \cite{N91}.  However, we can give an easy width-2
noncommutative arithmetic circuit for $P(x_0,x_1)$ of size $O(n)$.
Indeed, we can even ensure that each gate in this circuit is
\emph{homogeneous}.

\begin{proposition}
  The palindromes polynomial $P(x_0,x_1)$ has a width-2 noncommutative
  arithmetic circuit of size $O(n)$.
\end{proposition}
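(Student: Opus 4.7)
My plan is to exploit the recursion
\[
P_n \;=\; x_0\, P_{n-1}\, x_0 \;+\; x_1\, P_{n-1}\, x_1, \qquad P_0 = 1,
\]
and implement each step of it with a small gadget of three layers inside a width-$2$ layered circuit. Across the circuit I will maintain the invariant that at certain ``checkpoint'' layers the first of the two gates computes $P_k$ for successive values of $k$, so that after $n$ iterations the output gate computes $P_n$.

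The gadget that advances the checkpoint from $P_{k-1}$ to $P_k$ works as follows. In the first intermediate layer, both gates take the previous first gate as one input: the first gate left-multiplies it by the leaf $x_0$ and the second by the leaf $x_1$, producing $x_0 P_{k-1}$ and $x_1 P_{k-1}$. In the second intermediate layer, each gate right-multiplies its counterpart from the preceding layer by the matching leaf, yielding $x_0 P_{k-1} x_0$ and $x_1 P_{k-1} x_1$. In the third layer, the first gate sums the two previous gates and thus equals $P_k$; the second gate is set to any homogeneous polynomial of matching degree (for instance, the same sum).

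Starting from layer $1$ with the leaf constant $1$ playing the role of $P_0$ and iterating the gadget $n$ times gives a layered width-$2$ circuit with $3n + O(1)$ layers, hence size $O(n)$, whose output is $P_n$. Homogeneity is automatic: at every layer the two gates compute polynomials of the same degree, the degrees progressing across one gadget as $d, d+1, d+2, d+2$. The only bookkeeping I need to check is that the first layer of each new gadget reads only the first gate of the preceding checkpoint layer, so the ``other'' register may be freely overwritten at each iteration boundary; beyond this minor point I foresee no real obstacle.
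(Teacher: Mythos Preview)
Your argument is correct. The recursion $P_n = x_0 P_{n-1} x_0 + x_1 P_{n-1} x_1$ is exactly right, and your three-layer gadget implements it faithfully within width $2$; the homogeneity check is also fine. The paper itself does not give a proof of this proposition---it simply calls the construction ``easy'' and notes that every gate can be made homogeneous---so your write-up supplies precisely the natural argument the authors had in mind.
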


What then is a good candidate explicit polynomial that is not
computable by width-2 circuits of polynomial size? We believe that the
polynomial $P^\ell_k$ (of Section~\ref{mon}) for suitable $k$ is the
right candidate. A lower bound argument still eludes us. However, if
we consider \emph{monotone} constant-width circuits then even in the
commutative case we can show exponential size lower bounds for
monotone width-$k$ circuits computing $P^\ell_k$. Since $P^\ell_k$ is
computable by depth $2k$ arithmetic circuits (of unbounded fanin), it
follows that the constant-width and the constant-depth hierarchies of
monotone arithmetic circuits are infinite. We present these results in
Section~\ref{mon}.

\begin{remark}
  Regarding the separation of the constant-depth hierarchy of monotone
  circuits, we note that a separation has also been proved by Raz and
  Yehudayoff in \cite{RY09}; their lower bounds show a superpolynomial
  separation between the power of depth $k$ \emph{multilinear}
  circuits and depth $k+1$ monotone circuits for any $k$ (see
  \cite{RY09} for the definition and results regarding multilinear
  circuits). In contrast, our separation works only for monotone
  circuits, and only for infinitely many $k$. Nonetheless, we think
  that our separation is interesting because the separation we achieve
  is stronger. More precisely, the results of \cite{RY09} show a
  separation of the order of $2^{(\log s)^{1+\Omega(1/k)}}$ (that is,
  there is a polynomial that can be computed by circuits of depth
  $k+1$ and size $s$ but not by depth $k$ circuits of size $2^{(\log
    s)^{1+\Omega(1/k)}}$). On the other hand, our separation is at
  least as large as $2^{(\log s)^c}$ for any $c>0$ (see
  Section~\ref{mon} for the precise separation).
\end{remark}

A related question is the comparative power of noncommutative ABPs and
noncommutative formulas. Noncommutative formulas have polynomial size
noncommutative ABPs. However, $s^{O(\log s)}$ is the best known
formula size upper bound for noncommutative ABPs of size $s$. An
interesting question is whether we can prove a separation result
between noncommutative ABPs and formulas. We note that such a
separation in the \emph{monotone} case follows from an old result of
Snir \cite{S80}.

\begin{proposition}\hfill{~}\label{proposition2}
  Consider two noncommuting variables $\{x_0,x_1\}$. Let $L$ denote
  the set of all monomials of degree $2n$ with an equal number of
  $x_0$ and $x_1$, and consider the polynomial
  $E\in\Q\angle{x_0,x_1}$, where $E=\sum_{w\in L} w$. 
\begin{itemize}
\item[1.] There is a monotone homogeneous ABP for $E$ of size $O(n^2)$.
\item[2.] Any monotone formula computing $E$ is of size $n^{\Omega(\lg n)}$.
\end{itemize}
\end{proposition}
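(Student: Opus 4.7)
My plan proceeds in two independent parts.

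\textbf{Part 1 (ABP).} I would build a grid ABP. Place at layer $i \in \{0, 1, \ldots, 2n\}$ the vertices $v_{i,j}$ for $\max(0, i-n) \le j \le \min(i, n)$, interpreting $v_{i,j}$ as the state ``$i$ symbols read, $j$ of them $x_0$.'' From $v_{i,j}$ draw an edge labelled $x_0$ to $v_{i+1,j+1}$ and an edge labelled $x_1$ to $v_{i+1,j}$ whenever the target vertex exists. Declare $v_{0,0}$ the source and $v_{2n,n}$ the sink. A straightforward induction on $i$ shows that the polynomial computed between $v_{0,0}$ and $v_{i,j}$ is the sum of all length-$i$ words with exactly $j$ copies of $x_0$, so the whole ABP computes $E = \sum_{w \in L} w$. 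All edge labels are single variables and all implicit scalars equal $1$, so the ABP is homogeneous and monotone. Summing layer sizes $\min(i,n) - \max(0,i-n) + 1$ for $0\le i\le 2n$ gives $O(n^2)$ vertices.

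\textbf{Part 2 (Formula lower bound).} The plan is to invoke Snir \cite{S80} rather than reprove it. At the level of a sketch: given a monotone formula $F$ of size $s$ computing $E$, apply Hyafil--Brent balancing (which preserves monotonicity) to assume $\Depth(F) = O(\log s)$. In a monotone formula each parse tree contributes a single monomial of $E$ with coefficient $1$, so in a depth-$d$ balanced monotone formula the $\binom{2n}{n}$ balanced words of $L$ must arise as ordered noncommutative products of $O(d)$ ``factor monomials'' produced by small subformulas. Snir's counting argument shows that the number of distinct length-$2n$ balanced words realizable from such structured factorizations is at most $s^{O(\log s)}$; equating this with $\binom{2n}{n} = 2^{\Theta(n)}$ and solving forces $s \ge n^{\Omega(\log n)}$.

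The easy part is Part 1, which is a routine grid construction and produces the claimed homogeneous monotone ABP of size $O(n^2)$. The main obstacle is the combinatorial heart of Part 2: precisely quantifying how a size-$s$ balanced monotone formula can only generate $s^{O(\log s)}$ distinct length-$2n$ balanced words. This is exactly what \cite{S80} establishes, so my plan is to cite it rather than redo the factorization/pigeonhole bookkeeping.
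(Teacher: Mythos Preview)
Your Part~1 is correct and essentially identical to the paper's argument: the paper simply says ``directly from a standard $O(n^2)$ size DFA,'' which is exactly your grid ABP.

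Part~2, however, has a genuine gap. Snir's theorem in \cite{S80} is \emph{not} a counting statement about how many noncommutative balanced words a monotone formula can produce; it is a lower bound for monotone (commutative) formulas computing the elementary symmetric polynomial $e_n(y_1,\ldots,y_{2n})$. Your sketch attributes to Snir a claim of the form ``a size-$s$ balanced monotone formula realizes at most $s^{O(\log s)}$ balanced words,'' but no such statement is in \cite{S80}, and indeed a size-$O(n)$ formula like $(x_0+x_1)^{2n}$ already produces $2^{2n}$ distinct words, so any such counting bound would have to exploit balancedness in a way you have not specified. Your plan to ``cite rather than redo'' therefore does not go through: what you would need to redo is not bookkeeping but the entire lower-bound argument, adapted to a different polynomial. (As a secondary issue, Hyafil--Brent balancing in its standard form uses the decomposition $f = a\cdot g + b$ with $a$ obtained by subtraction, so ``preserves monotonicity'' also needs justification.)

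The paper's route is different and cleaner: it shows that a monotone noncommutative formula for $E$ can be \emph{converted} into a monotone commutative formula for $e_n(y_1,\ldots,y_{2n})$, and then invokes Snir's theorem as a black box. The conversion uses the observation that in a monotone formula for a homogeneous noncommutative polynomial, every gate computes a homogeneous polynomial, and moreover each leaf can be tagged with the exact position $i\in\{1,\ldots,2n\}$ its variable will occupy in any output monomial. One then replaces each $x_0$-leaf at position $i$ by a fresh commuting variable $y_i$ and each $x_1$-leaf by the constant $1$; the resulting commutative formula computes exactly $e_n(y_1,\ldots,y_{2n})$, and Snir's $n^{\Omega(\log n)}$ bound applies directly.
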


\begin{proof}
  The first part is directly from a standard $O(n^2)$ size DFA that
  accepts precisely the set $L=\{w\in\{x_0,x_1\}^{2n}\mid w$ has an
  equal number of $x_0$'s and $x_1$'s$\}$. The second part follows
  from the fact that such a monotone formula would yield a commutative
  monotone formula for the symmetric polynomial of degree $n$ over the
  variables $y_1,y_2,\cdots,y_{2n}$: this is obtained by first
  observing that the formula must compute homogeneous polynomials at
  each gate.  Furthermore, we can label each gate (and each leaf) by a
  triple $(i,j,d)$ where $j-i+1=d$ is the degree of the homogeneous
  polynomial computed at this gate such that each monomial generated
  at this gate will occupy the positions from $i$ to $j$ in the output
  monomials containing it. Hence we have $x_0$'s at the leaf nodes
  labeled by triples $(i,i,1)$ for all $2n$ values of $i$. We replace
  the $x_0$'s labeled $(i,i,1)$ by $y_i$ and each $x_1$ by $1$. The
  resulting formula computes the symmetric polynomial as claimed.
  Snir in \cite{S80} has shown a tight $n^{\Omega(\log n)}$ lower
  bound for monotone formulas computing the symmetric polynomial of
  degree $n$ over the variables $y_1,y_2,\cdots,y_{2n}$.
\end{proof}

To illustrate again the power of constant width circuits, we note that
there is, surprisingly, a width-$2$ circuit for computing the
polynomial $E$.

\begin{proposition}
  There is a width-2 circuit of size $n^{O(1)}$ for computing $E$ if
  the field $\F$ has at least $cn^2$ distinct elements for some constant
  $c$.
\end{proposition}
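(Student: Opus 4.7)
The plan is to express $E$ as a short $\F$-linear combination of $(2n)$-th powers of linear forms via Lagrange interpolation, and then to realize this sum by a width-two arithmetic circuit.

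View $P(y)=(yx_0+x_1)^{2n}$ as a polynomial in the commuting variable $y$ with coefficients in $\F\langle x_0,x_1\rangle$, and expand as $P(y)=\sum_{k=0}^{2n}y^k E_k$, where $E_k$ is the sum of all degree-$2n$ monomials having exactly $k$ copies of $x_0$; in particular, $E=E_n$. Choosing $2n+1$ distinct scalars $a_0,\dots,a_{2n}\in\F$ (which the bound $|\F|\geq cn^2$ amply provides), Lagrange interpolation produces fixed constants $\lambda_0,\dots,\lambda_{2n}\in\F$ with
$$
E=\sum_{i=0}^{2n}\lambda_i\,L_i^{2n},\qquad L_i:=a_ix_0+x_1.
$$
So the proof reduces to implementing this linear combination in width two.

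Each power $L_i^{2n}$ can be built in width two by repeated squaring: a squaring $R\mapsto R\cdot R$ is a single product gate both of whose children are the same previous gate, so it consumes only one of the two slots in a layer, leaving the second slot free. Interleaving $O(\log n)$ squarings with $O(\log n)$ multiplications by $L_i$ (one for each $1$-bit in the binary expansion of $2n$) assembles $L_i^{2n}$ in $O(\log n)$ layers; the linear form $L_i$ itself is constructed from leaves on the fly. What is left is to accumulate the $2n+1$ weighted summands into a single output gate.

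This accumulation step is the main obstacle. A naive scheme keeps $T_{i-1}:=\sum_{j<i}\lambda_jL_j^{2n}$ in one slot and the evolving partial power of $L_i$ in the other, but the intermediate step ``multiply the current partial product $R$ by $L_i$'' expands as $R\cdot L_i=R\cdot a_i\cdot x_0+R\cdot x_1$, which seems to want a third live register so as not to clobber $T_{i-1}$. I plan to resolve this by maintaining a coupled two-slot invariant in which the accumulator is entangled with the partial product---something of the form $(U_k,V_k)=(T_{i-1}+\alpha_kV_k,\,V_k)$ with $V_k=L_i^k$ and $\alpha_k$ an explicitly recursive scalar depending on $(i,k)$---so that each ``multiply by $L_i$'' step becomes a local update of $(U_k,V_k)$ using only the previous layer's two gates plus leaves, and at $k=2n$ the combination reads off $U_{2n}=T_i$ after one more layer.

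Working out the recursion for $\alpha_k$ and verifying that the two-slot update actually closes up is what I expect to be the genuinely hard part; I would first do the small cases $n=1,2$ by hand to pin down the right invariant. The large-field hypothesis $|\F|\geq cn^2$ enters precisely to guarantee that all the $\Theta(n^2)$ intermediate scalars arising across the $\Theta(n)$ epochs and $\Theta(n)$ substeps are nonzero and invertible. A fallback, in case the invariant proves inflexible, is to appeal to the bivariate identity $E=[y^nz^n](yx_0+zx_1)^{2n}$, which by two independent Lagrange interpolations writes $E$ as a double sum over $\Theta(n^2)$ terms $(a_ix_0+b_jx_1)^{2n}$ and offers additional algebraic flexibility for structuring the width-two accumulation.
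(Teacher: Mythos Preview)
Your interpolation setup is on the right track and close in spirit to the paper's, but the accumulation obstacle you flag is a genuine gap, and the proposed entanglement does not close it. To advance $V_k$ to $V_{k+1}=L_iV_k=a_iV_kx_0+V_kx_1$ you must, at some layer, hold both $V_kx_0$ and $V_kx_1$; each is produced only as $V_k$ times a leaf, so both width-two slots in that layer are products of $V_k$ with a variable, and whatever encoded $T$ is overwritten. Pre-mixing $T$ into a slot as $T+\alpha_kV_k$ does not help: multiplying that slot by a variable yields $Tx_0+\alpha_kV_kx_0$, which is irreversibly contaminated, and \emph{not} multiplying it means it contributes nothing toward producing $V_kx_0$ or $V_kx_1$. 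Your bivariate fallback $(a_ix_0+b_jx_1)^{2n}$ is just a rescaling of $L^{2n}$ for another two-variable linear form and faces the identical obstruction.

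The paper sidesteps the issue entirely by arranging the exponent to be a pure power of two, so that only squarings---single-slot operations---are ever needed. The trick is to pad the linear form with a constant term: take $g(x_0,x_1,z)=(x_0z^{M+1}+x_1z+1)^{M}$ with $M=2^{k+1}$ the smallest power of two at least $2n$. Since $M+1>M$, a term of $g$ contributes to the coefficient of $z^{(M+1)n+n}$ exactly when it selects $x_0z^{M+1}$ and $x_1z$ each $n$ times and $1$ the remaining $M-2n$ times, so that coefficient is a nonzero scalar multiple of $E$. For a fixed scalar $z_i$, the expression $x_0z_i^{M+1}+x_1z_i+1$ is assembled in one register via leaf operations alone (e.g.\ $z_i\cdot(z_i^{M}\cdot x_0+x_1)+1$), and then $k{+}1$ squarings give $g(x_0,x_1,z_i)$; the second register is a pure accumulator throughout. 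The $z$-degree of $g$ is $(M{+}1)M=\Theta(n^2)$, so interpolation needs $\Theta(n^2)$ evaluation points---this, and not any collection of ``intermediate scalars'', is the actual reason for the hypothesis $|\F|\ge cn^2$.
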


\begin{psketch}
  This is based on the well-known Ben-Or trick \cite{B} for computing
  the symmetric polynomials in depth $3$. We consider the polynomial
	$g(x_0,x_1,z) = (x_0z^{2^{k+1}+1}+x_1z+1)^{2^{k+1}}$, where $2^{k-1}< n \leq 2^k$.
  ($z$ will eventually be a scalar from \F.) The coefficient
	of $z^{(2^{k+1}+1)n + n}$ in $g$ is precisely the polynomial $E$. Following Ben-Or's
  argument, the problem of recovering the polynomial $E$ can be
  reduced to solving a system of linear equations with an invertible
  coefficient matrix. Hence $E$ can be expressed as a sum
  $E=\sum_{i=1}^{2n}\beta_ig(x_0,x_1,z_i)$, where
  the $z_i$s are all distinct field elements. The terms
  $\beta_i g(x_0,x_1,z_i)$ can be evaluated with \emph{one}
	register using repeated squaring of $x_0z_i^{2^{k+1}+1}+x_1z_i+1$.  The
  second register is used as an accumulator to compute the sum of these
  terms.
\end{psketch}

These observations are additional motivation for the study of
constant-width arithmetic circuits. In Section~\ref{mon} we prove
lower bound results for monotone constant-width circuits. In
Section~\ref{pit} we explore the connection between lower bounds and
polynomial identity testing for constant-width commutative circuits
analogous to the work of Dvir et al \cite{DSY}.

\section{Monotone constant width circuits}\label{mon}

In this section we study \emph{monotone} constant-width
arithmetic circuits. We prove that they form an infinite hierarchy. As
a by-product, the separating polynomials that we construct yield the
consequence that constant-depth monotone arithmetic circuits too form
an infinite hierarchy. All our polynomials will be commutative, unless
we explicitly state otherwise.

For positive integers $k$ and $\ell$ we define a polynomial
$P^\ell_k$ on $\ell^{2k}$ variables as follows:
\begin{eqnarray*}
& P^\ell_1(x_1,x_2,\ldots,x_{\ell^2}) = \sum_{i=1}^\ell
\prod_{j=1}^\ell x_{(i-1)\ell + j}\\
& P^\ell_{k+1}(x_1,x_2,\ldots,x_{\ell^{2k+2}}) = \sum_{i=1}^\ell
\prod_{j=1}^\ell P^\ell_k(x_{(i-1)\ell^{2k+1} +(j-1)\ell^{2k} +
1},\ldots,x_{(i-1)\ell^{2k+1} + j\ell^{2k}})
\end{eqnarray*}


An easy inductive argument from the definition gives the following.

\begin{lemma}\label{number}
  The polynomial $P^\ell_k$ is homogeneous of degree $\ell^k$ on
  $\ell^{2k}$ variables and has $\ell^{\frac{\ell^k-1}{\ell-1}}$
  distinct monomials.
\end{lemma}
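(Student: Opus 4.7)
The plan is to prove all three claims (number of variables, degree, number of monomials) simultaneously by induction on $k$, using the recursive definition of $P^\ell_k$. The key structural observation, which I would verify first by simply reading off the index ranges $(i-1)\ell^{2k+1} + (j-1)\ell^{2k} + 1, \ldots, (i-1)\ell^{2k+1} + j\ell^{2k}$ in the definition of $P^\ell_{k+1}$, is that the $\ell^2$ copies of $P^\ell_k$ appearing in $P^\ell_{k+1}$ are evaluated on pairwise disjoint blocks of variables. This disjointness is what makes the monomial count multiplicative.

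For the base case $k=1$, the formula gives $\ell^2$ variables, degree $\ell$, and $\ell$ monomials, matching $\ell^{(\ell-1)/(\ell-1)} = \ell$; this is immediate from the definition since $P^\ell_1$ is a sum of $\ell$ distinct monomials on disjoint variable blocks.

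For the inductive step, I would assume the statement for $k$ and check each claim for $P^\ell_{k+1} = \sum_{i=1}^\ell \prod_{j=1}^\ell Q_{i,j}$, where each $Q_{i,j}$ is a copy of $P^\ell_k$ on a fresh block of $\ell^{2k}$ variables. The variable count is $\ell \cdot \ell \cdot \ell^{2k} = \ell^{2k+2}$. Homogeneity and degree follow because each $Q_{i,j}$ is homogeneous of degree $\ell^k$ by induction, each inner product has degree $\ell \cdot \ell^k = \ell^{k+1}$, and summing homogeneous polynomials of equal degree preserves homogeneity. For the monomial count, let $M_k = \ell^{(\ell^k-1)/(\ell-1)}$. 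Since the blocks indexed by $j$ are disjoint, the product $\prod_{j=1}^\ell Q_{i,j}$ has exactly $M_k^\ell$ monomials; since the blocks indexed by $i$ are also disjoint, no cancellation or collision occurs in the outer sum, giving $\ell \cdot M_k^\ell$ monomials.

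The only remaining step is an arithmetic check:
\[
\ell \cdot M_k^\ell = \ell^{1 + \ell(\ell^k - 1)/(\ell-1)} = \ell^{((\ell-1) + \ell^{k+1} - \ell)/(\ell-1)} = \ell^{(\ell^{k+1}-1)/(\ell-1)},
\]
which matches the claimed formula for $k+1$. There is no real obstacle here; the only thing to be careful about is confirming the disjointness of the variable blocks from the index arithmetic, which ensures that monomial counts multiply cleanly and that the outer sum introduces no collisions.
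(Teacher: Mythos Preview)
Your proposal is correct and is exactly the ``easy inductive argument from the definition'' that the paper alludes to without spelling out. The paper gives no further details, so your write-up is in fact more complete than what appears there.
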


By definition, $P^\ell_k$ can be computed by a depth $2k$ monotone
formula of size $O(\ell^k)$. Furthermore, we can argue that the
polynomials $P^\ell_k$ are the ``hardest'' polynomials for
constant-depth circuits. We make this more precise in the following
observation.

\begin{proposition}\label{constant-depth-hard}
  Given a depth $k$ arithmetic circuit $C$ of size $s$, there is a
  projection reduction from $C$ to the polynomial $P^\ell_k$ where
  $\ell=O(s^{2k})$.
\end{proposition}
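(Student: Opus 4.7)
The plan is to transform $C$ into a formula whose underlying tree shape coincides with the recursion tree of $P^\ell_k$, after which the desired projection can simply be read off from the leaf labels. I would proceed in four steps. First, \emph{alternate} the layers of $C$: wherever two consecutive gates on some root-to-leaf path share the same type, insert a trivial gate of the opposite type ($\times 1$ after a $+$, or $+\,0$ after a $\times$); this produces a strictly alternating $\Sigma/\Pi$ circuit of depth at most $2k$. Second, \emph{unfold} the resulting DAG into a formula tree by duplicating shared subcircuits. Third, \emph{balance} the tree so that every root-to-leaf path has length exactly $2k$, padding shorter paths with additional trivial gates. Fourth, \emph{pad the fanin} at every internal node up to exactly $\ell$: at a $+$ gate, attach extra child-subtrees that evaluate to $0$; at a $\times$ gate, attach extra child-subtrees that evaluate to $1$. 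Taking $\ell = O(s^{2k})$ gives ample room to accommodate the original fanins, the padding, and the number of leaves produced in the process.

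After these steps the tree is isomorphic to the recursion tree of $P^\ell_k$: a complete alternating $\Sigma\Pi$ tree of depth $2k$ with fanin $\ell$ at every internal node and $\ell^{2k}$ leaves. Each leaf is labeled by a variable of $C$, a field constant of $C$, or by $0$ or $1$ coming from the padding. This labeling is precisely a projection $\rho$ from the $\ell^{2k}$ variables of $P^\ell_k$ into $\{x_1,\ldots,x_n\}\cup\F$, and by construction $P^\ell_k\circ\rho$ equals the polynomial computed by $C$.

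The main obstacle is the fanin-padding step: the ``$0$'' and ``$1$'' child-subtrees inserted at an intermediate depth must themselves have the shape of an appropriate sub-$P^\ell_j$ (so that the final tree is truly isomorphic to the recursion tree of $P^\ell_k$) and be realizable as projections of that sub-$P^\ell_j$. A $0$-evaluating copy of $P^\ell_j$ is immediate, since zeroing a single variable inside each of its $\ell$ product summands annihilates $P^\ell_j$. A $1$-evaluating copy of $P^\ell_j$ is produced by induction on $j$: designate one summand, set every variable inside it to $1$ using the inductive claim one level down, and zero out the remaining $\ell-1$ summands as above. Once these constant-evaluating sub-$P^\ell_j$'s are in hand, the claimed projection reduction falls out immediately.
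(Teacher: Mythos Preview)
Your approach is essentially the paper's own: transform $C$ into a layered alternating $\Sigma\Pi$ formula of depth $2k$ and size at most $s^{2k}$, then read off the projection from the leaves; you have simply spelled out the fanin-padding step that the paper's sketch leaves implicit. One small slip worth fixing: for $j>1$, zeroing a \emph{single} variable in each summand of $P^\ell_j$ does not annihilate that summand (each summand is a product of $P^\ell_{j-1}$'s, and killing one leaf variable leaves most monomials alive)---but of course setting \emph{all} variables to $0$ gives a $0$-evaluating projection trivially, so your argument goes through with this cosmetic correction.
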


\begin{psketch}
  We sketch the easy argument. We can transform $C$ into a
  formula. Furthermore, we can make it a layered formula with $2k$
  alternating $+$ and $\times$ layers such that the output gate is a plus
  gate. This formula is of size at most $s^{2k}$. Clearly, a
  projection reduction (mapping variables to variables or constants)
  will transform $P^\ell_k$ to this formula, for $\ell=O(s^{2k})$.
\end{psketch}

It is easy to see the following from the fact that a monotone depth
$2k$ arithmetic circuit of size $s$ can be simulated by a monotone
width $2k$ circuit of size $O(s)$.

\begin{proposition}\label{width_2k_upper_bound}
  For any positive integers $\ell$ and $k$ there is a monotone circuit
	of width $2k$ and size $O(\ell^{2k})$ that computes $P^\ell_{2k}$.
\end{proposition}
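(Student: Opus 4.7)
The plan is to invoke the depth-to-width simulation alluded to just before the proposition, applied to the natural depth-$2k$ formula given by the recursive definition of $P^\ell_k$. Unrolling the recursion produces a monotone formula with $2k$ alternating $+$ and $\times$ layers, fan-in $\ell$ at every internal gate, and total size $O(\ell^{2k})$. It remains to realize this depth-$2k$ formula as a monotone staggered circuit of width $2k$ and size $O(\ell^{2k})$; I do this by induction on $k$.

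For the base case $k=1$, the polynomial $P^\ell_1 = \sum_{i=1}^\ell \prod_{j=1}^\ell x_{(i-1)\ell + j}$ admits the obvious width-$2$ staggered circuit with a product accumulator $B$ and a sum accumulator $A$: for each $i$, build the $i$-th inner product in $B$ via $\ell-1$ successive multiplications, update $A \leftarrow A + B$, then reset $B$ to $1$. The total number of gates is $O(\ell^2)$.

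For the inductive step, suppose $C_k$ is a monotone width-$2k$ staggered circuit of size $O(\ell^{2k})$ computing $P^\ell_k$ with its output in a designated register $R^\star$. To build a width-$2(k+1)$ circuit for $P^\ell_{k+1}$, I introduce two fresh registers $A$ (outer sum) and $B$ (inner product), bringing the total width to $2k+2$. For each pair $(i,j) \in [\ell] \times [\ell]$, I invoke a fresh copy of $C_k$ on the appropriate block of $\ell^{2k}$ variables; during that invocation the registers $A$ and $B$ are carried forward unchanged by $\times 1$ gates, as permitted by Definition~\ref{def_arith_ckt}. Immediately after the $(i,j)$-th invocation I absorb $R^\star$ into $B$ (setting $B \leftarrow R^\star$ when $j=1$, else $B \leftarrow B \cdot R^\star$); after the inner loop over $j$ finishes I execute $A \leftarrow A + B$ and reset $B$ to $1$. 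The output is the value of $A$ after all $\ell$ outer iterations. The total size is $\ell^2 \cdot O(\ell^{2k}) + O(\ell^2) = O(\ell^{2(k+1)})$ and the width is $2(k+1)$, closing the induction.

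The only point requiring some care is that the auxiliary registers $A$ and $B$ must persist through each call to $C_k$; this is immediate for staggered circuits since any register not updated in a given layer is propagated by a $\times 1$ gate. Everything else is routine bookkeeping about how to splice one staggered circuit into another while reserving two extra registers for the enclosing sum-of-products structure.
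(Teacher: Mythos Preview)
Your argument is correct and matches the paper's approach: the paper simply invokes the general fact (stated immediately before the proposition) that a monotone depth-$2k$ circuit of size $s$ can be simulated by a monotone width-$2k$ circuit of size $O(s)$, applied to the natural depth-$2k$ formula for $P^\ell_k$; your inductive construction is precisely an explicit unrolling of that simulation for this particular family. One remark: the subscript ``$2k$'' in the proposition's displayed polynomial appears to be a typo for ``$k$'' (as confirmed by how the proposition is later used in the corollaries), and you have in any case proved the intended statement.
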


We now state the main lower bound result. For each $k>0$ there is
$\ell_0\in\Z^+$ such that for all $\ell>\ell_0$ any width $k$ monotone
circuit for $P^\ell_k$ is of size $\Omega(2^\ell)$. We will prove this
result by induction on $k$. For the induction argument it is
convenient to make a stronger induction hypothesis.

For a polynomial $f\in\F[X]$, where $X=\{x_1,x_2,\cdots,x_n\}$ let
$\mon(f)=\{m\mid m$ is a nonzero monomial in $f\}$. I.e.\ $\mon(f)$
denotes the set of nonzero monomials in the polynomial $f$. Also, let
$\var(f)$ denote the set of variables occurring in the monomials in
$\mon(f)$. Similarly, for an arithmetic circuit $C$ we denote by
$\mon(C)$ and $\var(C)$ respectively the set of nonzero monomials and
variables occurring in the polynomial computed by $C$. 

We call a layered circuit $C$ \emph{minimal} if there is no smaller
circuit $C'$ of the same width s.t $\mon(C) = \mon(C')$.  It can be
seen that for any monotone circuit $C$, there is a minimal circuit
$C'$ of the same width s.t $\mon(C') = \mon(C)$ and has the following
properties.

\begin{itemize}
\item The only constants used in $C'$ are $0$ and $1$. Furthermore, no
  gate is ever multiplied by a constant.
\item By the minimality of $C'$ every node $g$ in $C'$ has a path to
  the output node of $C'$. Hence, given any node $g$ in $C'$ computing
  a polynomial $p$, there is a monomial $m$ such that $\mon(m\cdot
  p)\subseteq\mon(C')$. In particular, this implies that if $C'$
  computes a homogeneous multilinear polynomial, then $p$ must be a
  homogeneous multilinear polynomial.
\item If $C'$ computes a homogeneous multilinear polynomial of degree
  $d$, and if a node $g$ in layer $i$ also computes a polynomial $p$
  of degree $d$, then in layer $i+1$, there is a sum gate $g'$ such
  that $g$ is one of its children. Thus, the gate $g'$ computes a
  homogeneous multilinear polynomial $p'$ of degree $d$ such that
  $\mon(p)\subseteq\mon(p')$. In particular,
  $\mon(p)\subseteq\mon(C')$.
\end{itemize}

We call a minimal circuit satisfying the above a \emph{good} minimal
circuit.  We now show a useful property of minimal circuits $C$, which
applies to circuits satisfying $\mon(C)\subseteq P^\ell_k$, for all
$\ell,k\geq 1$.


\begin{lemma}\label{lemma_low_deg_prod}
  Let $f = \sum_{i=1}^\ell P_i$ be a homogeneous monotone polynomial
  of degree $d\geq 1$ with $\var(P_i)\cap\var(P_j) = \emptyset$ for
  all $i\neq j$. Given any good minimal circuit such that
  $\mon(C)\subseteq\mon(f)$, we have the following: if a gate $g$ in
  $C$ computes a polynomial $p$ of degree less than $d$, or a product
  of two such polynomials, then $\var(p)\subseteq\var(P_i)$ for a
  \emph{unique} $i$.
\end{lemma}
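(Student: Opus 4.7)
The plan is to exploit the disjointness of the variable sets $\var(P_1),\ldots,\var(P_\ell)$. Because $f$ is homogeneous of degree $d \geq 1$, every monomial of $f$ has at least one variable, and any such variable pins the monomial down to a single $\var(P_i)$. Thus, if I can exhibit one variable that is shared between \emph{every} monomial of $p$ and some common $\var(P_{i^*})$, then $\var(p) \subseteq \var(P_{i^*})$ with $i^*$ unique. The work consists of extracting such a variable from the ``witness monomial'' clauses that come with the second and third properties of a good minimal circuit.

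I would first treat the case where $g$ itself computes a polynomial $p$ of degree $d' < d$. The second property supplies a monomial $m$ with $\mon(m \cdot p) \subseteq \mon(C) \subseteq \mon(f)$. Since $p$ is homogeneous (second property) and every monomial of $f$ has degree $d$, the monomial $m$ has degree $d-d' \geq 1$; so $\var(m)$ is nonempty and, by disjointness, lies in a unique $\var(P_{i^*})$. Monotonicity of $C$ rules out cancellation, so for every $m_p \in \mon(p)$ the product $m \cdot m_p$ is actually a monomial of $f$, and thus $\var(m) \cup \var(m_p) \subseteq \var(P_j)$ for some $j$; the inclusion $\var(m) \subseteq \var(P_{i^*})$ forces $j = i^*$, and hence $\var(m_p) \subseteq \var(P_{i^*})$, giving $\var(p) \subseteq \var(P_{i^*})$.

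For the product case, let $g$ be a product gate whose children compute $p_1, p_2$ of degree $<d$, so $p = p_1 p_2$. Monotonicity together with $\mon(C) \subseteq \mon(f)$ forces every gate of $C$ to have degree at most $d$, so $\deg(p_1)+\deg(p_2) \leq d$. If this sum is strictly less than $d$, the previous paragraph applied to $g$ itself already gives the conclusion; otherwise $\deg(p) = d$ and the third property yields $\mon(p) \subseteq \mon(C) \subseteq \mon(f)$ directly. Picking any $m_1 \in \mon(p_1)$ and $m_2 \in \mon(p_2)$, monotonicity makes $m_1 m_2$ a genuine monomial of $p$, hence of $f$, so $\var(m_1) \cup \var(m_2) \subseteq \var(P_j)$ for a unique $j$; the first-case argument applied separately to the children of $g$ then certifies that $\var(p_1),\var(p_2) \subseteq \var(P_j)$, so $\var(p) \subseteq \var(P_j)$.

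I do not anticipate a deep obstacle. The subtlety worth watching is the role of monotonicity: it is what converts the existential ``there is a witness monomial $m$'' into a genuine variable-set containment, by forbidding cancellation in either $m \cdot p$ or $p_1 \cdot p_2$. The other mild observation is that gates of a minimal monotone circuit computing $f$ have degree at most $d$, which again rests on non-cancellation; once this is in hand, the two cases above cover the statement.
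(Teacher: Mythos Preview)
Your argument is correct and is essentially the same as the paper's: both hinge on the observation that the witness monomial $m$ (of degree $d-d'\geq 1$) supplies a variable common to every monomial of $m\cdot p$, which by the disjointness of the $\var(P_i)$ forces everything into a single block. The paper packages this as a bipartite-graph connectivity statement (defining $G(q)$ with $\mon(q)$ on one side and $\var(q)$ on the other, so that $G(f)$ decomposes into the disjoint $G(P_i)$ and $G(m\cdot p)$ is connected), whereas you carry out the same reasoning by direct monomial chasing and a case split on $\deg(p_1)+\deg(p_2)$; the content is identical.
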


\begin{proof}
  For any polynomial $q\in\F[x_1,x_2,\cdots,x_n]$ we can define a
  bipartite graph $G(q)$ as follows: one partition of the vertex set
  is $\mon(q)$ and the other partition $\var(q)$. A pair $\{x,m\}$ is
  an undirected edge if the variable $x$ occurs in monomial $m$. It is
  clear that the graph $G(f)$ is just the disjoint union of all the
  $G(P_i)$.
	
  If the polynomial $p$ computed by gate $g$ is of degree $d'< d$,
  then, since $C$ is good, there is a monomial $m$ of degree $d'-d$
  such that $\mon(m\cdot p)\subseteq\mon(C)\subseteq\mon(f)$.  This
  implies that $G(m\cdot p)$ is a subgraph of $G(f)$. On the other
  hand, $G(m\cdot p)$ is clearly seen to be a connected graph.  This
  implies that, in fact, $G(m\cdot p)$ is a subgraph of $G(P_i)$ for
  some $i$ and hence, $\var(p)\subseteq\var(P_i)$ for a \emph{unique}
  $i$. This proves the lemma in this case.

  Similarly, if $p$ is a product of two polynomials of degree less
  than $d$, then $G(p)$ is a connected graph, and by the above
  reasoning, it must be the subgraph of some $G(P_i)$. Hence, the
  lemma follows.
\end{proof}

%
%

We now state and prove a stronger lower bound statement. It shows
that $P^\ell_k$ is even hard to ``approximate'' by polynomial size
width-$k$ monotone circuits.

\begin{theorem}\label{lbound1}
  For each $k>0$ there is $\ell_0\in\Z^+$ such that for all
  $\ell>\ell_0$ and any width-$k$ monotone circuit $C$ such that
\[  
\mon(C)\subseteq \mon(P^\ell_k) \textrm{ and } |\mon(C)|\geq
\frac{|\mon(P^\ell_k)|}{2},
\]
the circuit $C$ is of size at least $\frac{2^\ell}{10}$.
\end{theorem}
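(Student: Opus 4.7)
My plan is to prove the theorem by induction on $k$, preserving the ``at least half of $\mon(P^\ell_k)$'' relaxation throughout so that an averaging argument can cascade across levels of the induction.

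For the base case $k=1$, I would exploit that $P^\ell_1 = \sum_{j=1}^\ell m_j$ is a sum of $\ell$ pairwise variable-disjoint monomials of degree $\ell$. A good minimal width-$1$ circuit has one gate per layer, and its sequence of polynomials $p_1, p_2, \ldots, p_t$ evolves only by binary operations involving the previous register and leaves. Applying Lemma~\ref{lemma_low_deg_prod} with the decomposition $P^\ell_1 = \sum_j m_j$ pins every $p_i$ of degree strictly less than $\ell$ into the variables of a unique $m_j$. A short case analysis of the admissible width-$1$ transitions shows that once the register first reaches degree $\ell$ it must equal a single monomial $m_j$ (because the only degree-$\ell$ multilinear monomial supported on one class's variables is $m_j$ itself), and no subsequent operation can grow $|\mon(p_i)|$ while keeping $\mon(C) \subseteq \mon(P^\ell_1)$. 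Thus $|\mon(C)| \le 1 < \ell/2$ for $\ell \ge 3$, the hypothesis is unsatisfiable, and the size bound holds vacuously.

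For the inductive step $k \to k+1$, I decompose $P^\ell_{k+1} = \sum_{i=1}^\ell Q_i$ with $Q_i = \prod_{j=1}^\ell R_{i,j}$, where each $R_{i,j}$ is a copy of $P^\ell_k$ on a variable-disjoint block. Given a good minimal width-$(k+1)$ monotone circuit $C$ with $\mon(C) \subseteq \mon(P^\ell_{k+1})$ and $|\mon(C)| \ge |\mon(P^\ell_{k+1})|/2$, averaging over $i$ yields some $i_0$ with $|\mon(C) \cap \mon(Q_{i_0})| \ge |\mon(Q_{i_0})|/2$. Since $\mon(Q_{i_0})$ is the Cartesian product of the $\mon(R_{i_0,j})$'s, letting $S_j$ denote the projection of $\mon(C) \cap \mon(Q_{i_0})$ onto the $j$-th factor gives $\prod_j |S_j|/|\mon(R_{i_0,j})| \ge 1/2$; since each ratio is at most $1$, every ratio must be at least $1/2$. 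I would then extract, for some $j_0$, a width-$k$ monotone circuit whose output has monomial support $S_{j_0}$, via the substitution: set every variable of $Q_i$ ($i \ne i_0$) to $0$ and every variable of $R_{i_0, j}$ ($j \ne j_0$) to a common positive real $\alpha$ with $P^\ell_k(\alpha,\ldots,\alpha) = 1$ (possible by the homogeneity and positivity of $P^\ell_k$). Under this substitution, $C$ computes a monotone polynomial whose monomial support is exactly $S_{j_0}$, and I claim that at every layer at least one of the $k+1$ gates becomes a scalar (because its variable support lies in the substituted variables), so inlining those gates reduces the effective width to $k$ without increasing the size. The inductive hypothesis applied to the resulting width-$k$ circuit yields size $\ge 2^\ell/10$, a bound inherited by $C$.

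The main obstacle is the width reduction after substitution. The informal picture---that one register is perpetually ``the accumulator for the partial product $\prod_{j \ne j_0} R_{i_0, j}$'' while the remaining $k$ carry out the recursive computation of $R_{i_0, j_0}$---must be turned into a rigorous statement about an \emph{arbitrary} good minimal $C$, where the role of each register may shift from layer to layer. Making this precise requires combining good-minimality with Lemma~\ref{lemma_low_deg_prod} applied not only to the top-level sum $\sum_i Q_i$ but also recursively inside $Q_{i_0}$ to classify gates by which $R_{i_0, j}$'s they touch, and then choosing the index $j_0$ uniformly across the whole circuit so that at every layer some gate has variable support contained in $\bigcup_{j \ne j_0} \var(R_{i_0, j})$ and hence becomes a scalar after substitution. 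This coordinated layer-wise analysis tying the induction together is the principal technical burden of the proof.
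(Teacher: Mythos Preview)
Your width-reduction step is where the argument breaks, and it is not merely a technical burden to be discharged later: as stated, the claim is false. Once the first degree-$d$ polynomial appears (at some layer $i_0$), good-minimality forces a chain of sum gates $g_{i,1}$, $i\ge i_0$, each of degree $d$, whose monomial sets are increasing and eventually equal $\mon(C)$. As soon as this ``accumulator'' has absorbed even one monomial of $Q_{i_0}$, its variable support contains all of $\var(Q_{i_0})$, in particular $\var(R_{i_0,j_0})$; under your substitution it is therefore \emph{not} a scalar. Meanwhile the circuit must at some stretch of layers build the factor $R_{i_0,j_0}$ itself, and during those layers the remaining $k$ gates can all have support inside $\var(R_{i_0,j_0})$; none of them become scalars either. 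So for every choice of $j_0$ there will be layers at which no gate collapses, and you do not obtain a width-$k$ circuit. There is no averaging over $j_0$ that rescues this, because the bad layers depend on $j_0$ and cover all choices.

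The paper's proof avoids this trap by not trying to make any column scalar. Instead it (i) identifies the accumulator column $g_{i,1}$ explicitly, (ii) zeroes out the at most $k$ blocks $\var(P_s)$ touched at layer $i_0$ so that the accumulator restarts at $0$, and then (iii) simply deletes the accumulator column, replacing every wire into it by $0$, to obtain a width-$(k-1)$ circuit $C''$. The crucial observation is that $\mon(f')$ (the output after zeroing) is the union of at most $t\le\mathrm{size}(C)$ sets $S_j$, one for each layer where the accumulator strictly grows, and each $S_j$ is contained in $\mon(C'')$ at that layer. The induction hypothesis is then applied in a \emph{strengthened} form (Claim~\ref{sizeclaim}): a width-$(k-1)$ circuit of size $<2^\ell/10$ with $\mon\subseteq\mon(P^\ell_k)$ captures at most a $1/2^\ell$ fraction of the monomials, obtained by restricting to each $Q_{ij}$ and multiplying the $\ell$ factors of $1/2$. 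Hence $|\mon(f')|\le t\cdot|\mon(P^\ell_k)|/2^\ell\le|\mon(P^\ell_k)|/10$, and adding back the $\le k$ zeroed blocks gives $|\mon(C)|<|\mon(P^\ell_k)|/2$. Note how the size bound enters: it controls the number $t$ of pieces in the decomposition, something your substitution approach never exploits.
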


\begin{proof}
  Let us fix some notation: given $i\in \Z^+$ and $j\in[w]$, we denote
  by $g_{i,j}$ the $j$th node in layer $i$ of $C$ and by $f_{i,j}$ the
  polynomial computed by $g_{i,j}$. Also, given a set of monomials
  $M$, we say that a circuit $C_1$ \emph{computes} $M$ if
  $\mon(C_1)\supseteq M$.
	
  Without loss of generality, we assume throughout that $C$ is a good
	minimal circuit.  The proof is by induction on $k$. The case $k=1$
	is distinct and easy to handle.  Thus, we consider as the induction
	base case the case $k=2$.  Consider a width two monotone circuit $C$
	such that $\mon(C)\subseteq\mon(P^\ell_2)$ and $|\mon(C)|\geq
  |\mon(P^\ell_2)|/2=\ell^{\ell+1}/2$. Let $f$ denote the polynomial
  computed by $C$. By Lemma~\ref{number} both $f$ and $P^\ell_2$ are
  homogeneous polynomials of degree $d=\ell^2$.
	
  We write the polynomial $P^\ell_2$ as $\sum_{i=1}^\ell P_i$, where
  $\var(P_i) = \{x_{(i-1)\ell^{3}+1},\ldots,x_{i\ell^{3}}\}$. Note
  that $\var(P_i)\cap \var(P_j)=\emptyset$ for $i\neq j$. Let
  $f=\sum_{i=1}^\ell P_i'$ where $\mon(P_i')\subseteq\mon(P_i)$ for
  each $i$.

  Since $C$ is good and $f$ is homogeneous, each gate of $C$
  computes only homogeneous polynomials. Moreover, since
  $\mon(C)\subseteq\mon(P^\ell_2)$ and
  $\var(P_i)\cap\var(P_j)=\emptyset$ for $i\neq j$,
  Lemma~\ref{lemma_low_deg_prod} implies that given any node $g$ in
  $C$ that computes a polynomial $p$ of degree less than $d$ or a
  product of such polynomials satisfies $\var(p)\subseteq\var(P_i)$
  for \emph{one} $i$.  Consider the lowest layer ($i_0$ say) when the
  circuit $C$ computes a degree $d$ monotone polynomial. W.l.o.g
  assume that $f_{i_0,1}$ is such a polynomial. We list some crucial
  properties satisfied by $g_{i_0,1}$ and $C$.
\begin{enumerate}

\item By the minimality of $i_0$, the node $g_{i_0,1}$ is a product
 gate computing the product of polynomials of degree less than $d$.
 Hence, $\var(f_{i_0,1})\subseteq\var(P_i)$ for exactly one $i$.
 W.l.o.g\ , we assume $i=1$. Since $\deg(f_{i_0},1) = d$ and $C$ is
 good, we in fact have $\mon(f_{i_0,1})\subseteq\mon(P_1)$.

\item Since $\deg(f_{i_0,1})=d$ and $C$ is good, we know that there is
  a node $g_{i_0+1,j_{i_0+1}}$ that is a sum gate with $g_{i_0,1}$ as
  child; $g_{i_0+1,j_{i_0+1}}$ computes a homogeneous polynomial of
  degree $d$ and $\mon(f_{i_0+1,j_{i_0+1}})\supseteq\mon(f_{i_0,1})$.
  Iterating this argument, we see that there must be a sequence of
  nodes $g_{i,j_i}$, for $i> i_0$ such that for each $i$, $g_{i,j_i}$
  is a sum gate with $g_{i-1,j_{i-1}}$ as child, such that 
  $\mon(f_{i_0,1})\subseteq\mon(f_{i_0+1,j_{i_0+1}})
  \subseteq\mon(f_{i_0+2,j_{i_0+2}})\ldots$, and each $f_{i,j_i}$ is
  a homogeneous polynomial of degree $d$.  We assume, w.l.o.g, that
  $j_i = 1$ for each $i> i_0$.

\end{enumerate}

By the choice of $i_0$, note that the node $g_{i_0,2}$ either computes
a polynomial of degree less than $d$ or computes a product of
polynomials of degree less than $d$.  Hence,
$\var(f_{i_0,2})\subseteq\var(P_i)$ for some $i$. If $i>1$, we assume
w.l.o.g.  that $\var(p)\subseteq\var(P_2)$. Let us consider the
circuit $C$ with the variables in $\var(P_1)\cup\var(P_2)$ set to $0$.
The polynomial computed by the new circuit $C'$ is now $f' =
f-P_1'-P_2'=\sum_{i=3}^\ell P'_i$. Let $q_{i,j}$ denote the new
polynomial computed by the node $g_{i,j}$.  Note that each $q_{i_0,j}$
is now a constant polynomial.

Consider the monotone circuit $C''$ obtained from $C'$ as follows: we
remove all the gates below layer $i_0$; the gate $g_{i_0,2}$ in layer
$i_0$ is replaced by a product gate $c\times1$, where $c$ is the constant
it computes in $C'$; from layer $i_0$ onwards, all nodes of the form
$g_{i,1}$ are removed; in any edge connecting nodes $g_{i,1}$ and
$g_{i+1,2}$, the node $g_{i,1}$ is replaced by the constant $0$.
Clearly, $C''$ is a width $1$ circuit.  For ease of notation, we will
refer to the nodes of $C''$ with the same names as the corresponding
nodes in $C'$. For any node $g_{i,2}$ in $C''$ ($i\geq i_0$), let
$q_{i,2}'$ be the polynomial it now computes. Crucially, we observe
the following from the above construction.
\begin{Claim}
  For each $i \geq i_0$,
  $\mon(q_{i,2}')\supseteq\mon(q_{i,2})\setminus\mon(q_{i,1})$.
\end{Claim}


We now finish the proof of the base case. Define a sequence $i_1< i_2
< \ldots< i_t$ of layers as follows: for each $j\in[t]$, $i_j$ is the
least $i>i_{j-1}$ such that
$\mon(q_{i,1})\supsetneq\mon(q_{i_{j-1},1})$, and $\mon(q_{i_t,1}) =
\mon(f')$. Clearly, $t$ is at most the size of $C$. Note that it must
be the case that $q_{i_j,1} = q_{i_j-1,1} + q_{i_j-1,2}$. Hence, we
have $\mon(q_{i_j,1}) = \mon(q_{i_j-1,1})\cup\mon(q_{i_j-1,2}) =
\mon(q_{i_{j-1},1})\cup(\mon(q_{i_j-1,2})\setminus\mon(q_{i_j-1,1}))$.
By the above claim, the set
$\mon(q_{i_j-1,2})\setminus\mon(q_{i_j-1,1})$, which we will denote by
$S_j$, can be computed by a width-$1$ circuit. Thus, $\mon(f') = \mon(q_{i_t,1})
= \mon(q_{i_0,1})\cup\bigcup_{j=1}^tS_j$, where each $S_j$ can be
computed by a width-$1$ circuit. Since $q_{i_0,1}$ is the zero
polynomial, we have $\mon(f') = \bigcup_{j=1}^t S_j$.

Now, consider any width-$1$ monotone circuit computing a set
$S\subseteq P_2^\ell$. It is easy to see that the set $S$ computed
must have a very restricted form.

%

\begin{Claim}
  The set $S$ is of the form $\mon(p)$ where $p = (\sum_{i\in X_1}
  x_i)\prod_{j\in X_2}x_j$, and $X_1\cap X_2=\emptyset$.
\end{Claim}

Clearly, as each set $S_j$ satisfies $S_j\subseteq \var(P_i')$ for
some $i$, it can have at most $\ell^3$ monomials.  Therefore, if the
monotone circuit $C$ is of overall size less than $2^\ell$ then it can
compute a polynomial of the form $P_1'+P_2'+f'$, where $f'$ has at
most $2^\ell\ell^3$ monomials.  Since $|\mon(P_i')|\leq |\mon(P_i)|=
\ell^\ell$ for each $i$, we have for suitably large $\ell$
\[
|\mon(C)|\leq 2\ell^\ell + 2^\ell\ell^3 < 3\ell^\ell <
\frac{\ell^{\ell+1}}{2}=\frac{|\mon(P^\ell_2)|}{2}
\]
and the base case follows.\\

\noindent{\bf The induction step.}

Consider any monotone circuit $\hat{C}$ of width $k-1$ such that
$\mon(\hat{C})\subseteq\mon(P^\ell_{k-1})$ and $|\mon(\hat{C})|\geq
|\mon(P^\ell_{k-1})|/2$. As induction hypothesis we assume that
$\hat{C}$ must be of size at least $2^\ell/10$.

Let $P^\ell_k=\sum_{i=1}^\ell P_i$, with $\var(P_i) =
\{x_{(i-1)\ell^{2k+1}+1},\ldots,x_{i\ell^{2k+1}}\}$ as in the base
case. By definition, the $\ell$ variable sets $\var(P_i)$ are mutually
disjoint and each $P_i$ has degree $d=\ell^k$. It is convenient to also write $P_i=\prod_{j=1}^\ell
Q_{ij}$, where each $Q_{ij}$ is of type $P^\ell_{k-1}$. We have
$\var(Q_{ij}) =
\{x_{(i-1)\ell^{2k+1}+(j-1)\ell^{2k}+1},\ldots,x_{(i-1)\ell^{2k+1}+j\ell^{2k}}\}$.

We start by considering any width $k-1$ circuit $\hat{C}$ of size less
than $2^\ell/10$ such that $\mon(\hat{C})\subseteq\mon(P^\ell_k)$. For
any $i\in[\ell]$, by fixing all the variables outside $\var(P_i)$ to
$0$, we obtain a width $k-1$ circuit $\hat{C}_i$ of the same size s.t
$\mon(\hat{C}_i)\subseteq\mon(P_i)$. Further, by setting all the
variables outside $\var(Q_{ij})$ to $1$ for some $j\in [\ell]$, we
obtain a circuit $\hat{C}_{ij}$ s.t
$\mon(\hat{C}_{ij})\subseteq\mon(Q_{ij})$. By the induction
hypothesis, we see that $|\mon(\hat{C}_{ij})|\leq |\mon(Q_{ij})|/2$.
Clearly
$\mon(\hat{C}_i)\subseteq\mon(\hat{C}_{i1})\times\mon(\hat{C}_{i2})\times\ldots\times
\mon(\hat{C}_{i\ell})$.  Therefore, $|\mon(\hat{C}_i)|\leq
\prod_j|\mon(\hat{C}_{ij})|\leq |\mon(P_i)|/2^{\ell}$. Finally, as
$\mon(\hat{C}) = \bigcup_i\mon(\hat{C}_i)$,
$|\mon(\hat{C})|\leq\sum_i|\mon(\hat{C}_i)|\leq
|\mon(P^\ell_k)|/2^\ell$. We have established the following claim.

\begin{Claim}\label{sizeclaim}
  For any width $k-1$ circuit $\hat{C}$ of size less than $2^\ell/10$
  such that $\mon(\hat{C})\subseteq\mon(P^\ell_k)$, we have
$|\mon(\hat{C})|\leq {|\mon(P^\ell_k)|\over 2^\ell}$.
\end{Claim}

For the induction step, consider any monotone width-$k$ circuit $C$
such that $\mon(C)\subseteq\mon(P^\ell_k)$ and of size at most
$2^\ell/10$. We will show that $|\mon(C)| <
|\mon(P^\ell_k)|/2$. W.l.o.g, we can assume that $C$ is a good minimal
circuit. Let $f$ denote the polynomial computed by $C$; we write
$f=\sum_{i=1}^\ell P_i'$, where $\mon(P_i')\subseteq\mon(P_i)$ for each $i$. 

As in the base case, let $i_0$ be the first layer where a polynomial
of degree $d$ is computed.  W.l.o.g.\ we can assume that $f_{i_0,1}$
is such a polynomial. By the minimality of $i_0$, the node $g_{i_0,1}$
must be a product node with children computing polynomials of degree
less than $d$. This implies, as in the base case, that
$\var(f_{i_0,1})\subseteq\var(P_i)$ for a unique $i$.  W.l.o.g.\ we
assume that $i=1$. As before, we can fix a sequence of
nodes $g_{i,j_i}$ for each $i>i_0$ such that $g_{i,j_i}$ is a sum gate
with $g_{i-1,j_{i-1}}$ as a child. It is easily seen that
$\mon(f_{i_0,1})\subseteq\mon(f_{i_0+1,j_{i_0+1}})\subseteq\mon(f_{i_0+2,j_{i_0+2}})\ldots$,
and each $f_{i,j_i}$ computes a homogeneous polynomial of
degree $d$. Renaming nodes if necessary, we assume $j_i=1$ for all
$i$.

Now consider $f_{i_0,j}$ for $j>1$. By the minimality of $i_0$, we see
that each $f_{i_0,j}$ is either a polynomial of degree less than $d$
or a product of two such polynomials. Hence,
$\var(f_{i_0,j})\subseteq\var(P_s)$ for some $s\in[\ell]$. Thus, there
is a set $S\subseteq [\ell]$ s.t $|S| = k' < k$ such that
$\bigcup_{j>1}\var(f_{i_0,j})\subseteq\bigcup_{s\in S}\var(P_s)$.
Without loss of generality, we assume that those $s\in S$ that are greater
than 1 are among $ \{2,3,\ldots,k\}$.

Consider the circuit $C'$ obtained when each of the variables in
$\bigcup_{s\in [k]}\var(P_s)$ is set to $0$. Let $q_{i,j}$ be the
polynomial computed by $g_{i,j}$ in $C'$. The polynomial computed by
$C'$ is just $f' = f - \sum_{s\in [k]}P_s'$. Note that $q_{i_0,j}$ is
now simply a constant for each $j$, and that the size of $C'$ is at
most the size of $C$ which by assumption is bounded by $2^\ell/10$.
Using this size bound we will argue that $C'$ cannot compute too many
monomials.

We now modify $C'$ as follows: we remove all the gates below layer
$i_0$; each gate $g_{i_0,j}$ with $j>1$ is replaced by a product gate
of the form $c\times 1$ where $c$ is the constant $g_{i_0,1}$ computes in
$C'$; from layer $i_0$ onwards, all nodes of the form $g_{i,1}$ are
removed; in any edge connecting nodes $g_{i,1}$ and $g_{i+1,j}$ for
$j>1$, the node $g_{i,1}$ is replaced by the constant $0$. Call this
new circuit $C''$. Clearly, $C''$ has size at most the size of $C$ and
width at most $k-1$. For ease of notation, we will refer to the nodes
of $C''$ with the same names as the corresponding nodes in $C'$. For
any node $g_{i,j}$ in $C''$ ($i\geq i_0$ and $j>1$), let $q_{i,j}'$ be
the polynomial it now computes. As in the base case, we observe the
following from the above construction.
\begin{Claim}
For each $i \geq i_0$ and each $j>1$,
$\mon(q_{i,j}')\supseteq\mon(q_{i,j})\setminus\mon(q_{i,1})$.
\end{Claim}

Using this, we show that the circuit $C'$ was essentially just using
the gates $g_{i,1}$ to store the sum of polynomials computed using
width $k-1$ circuits.

Construct a sequence of layers $i_1 < i_2 < \ldots < i_t$ in $C'$ as
follows: for each $j\in[t]$, $i_j$ is the least $i>i_{j-1}$ such that
$\mon(q_{i,1})\supsetneq\mon(q_{i_{j-1},1})$, and $\mon(q_{i_t,1}) =
\mon(f')$. Surely, $t$ is at most the size of $C'$. Now, fix any $i_j$
for $j\geq 1$. Clearly, it must be the case that $q_{i_j,1} =
q_{i_j-1,1} + q_{i_j-1,s}$ for some $s>1$; therefore, we have
$\mon(q_{i_j,1})\subseteq\mon(q_{i_j-1,1})\cup(\mon(q_{i_j-1,s})\setminus\mon(q_{i_j-1,1}))$.
Denote the set $\mon(q_{i_j-1,s})\setminus\mon(q_{i_j-1,1})$ by $S_j$.
Since the above holds for all $j$, and $\mon(q_{i_j-1,1}) =
\mon(q_{i_{j-1},1})$, we see that $\mon(f') =
\mon(q_{i_t,1})\subseteq\mon(q_{i_0,1})\cup\bigcup_jS_j = \bigcup_j
S_j$, since $q_{i_0,1}$ is the zero polynomial.

We will now analyze $|S_j|$ for each $j$. By the above claim, there is
a width $k-1$ circuit $C''$ of size at most the size of $C$ such that
$S_j\subseteq \mon(C'')\subseteq P^\ell_k$. If the size of $C$ (and
hence that of $C'$ and $C''$) is at most $2^\ell/10$, it follows from
Claim~\ref{sizeclaim} that $|S_j|\leq |\mon(P^\ell_k)|/2^\ell$. Hence,
we see that $|\mon(f')|\leq t|\mon(P^\ell_k)|/2^\ell$, which is at
most $|\mon(P^\ell_k)|/10$. But we know that the polynomial $f$
computed by the circuit $C$ is of the form $f' + \sum_{i\in [k]}P_i'$,
where $|\mon(P_i')| \leq |\mon(P_i)| = |\mon(P^\ell_k)|/\ell$.
Therefore,
\[ |\mon(f)| \leq \frac{k}{\ell}|\mon(P^\ell_k)| + |\mon(f')| \leq
|\mon(P^\ell_k)|\left(\frac{k}{\ell}+\frac{1}{10}\right) <
\frac{|\mon(P^\ell_k)|}{2}
\]
for large enough $\ell$. This proves the induction step.
\end{proof}

For $k\in\Z^+$ and $c>0$ let $\Depth_{k,c}$ and $\Width_{k,c}$ denote
the set of families $\{f_n\}_{n>0}$ of monotone polynomials
$f_n\in\real[x_1,x_2,\ldots,x_n]$ computed by $c\cdot n^c$-sized
monotone circuits of depth $k$ and width $k$ respectively. For
$k\in\Z^+$, let $\Depth_k=\bigcup_{c>0}\Depth_{k,c}$ and
$\Width_k=\bigcup_{c>0}\Width_{k,c}$. Thus, $\Depth_k$ and $\Width_k$
denote the set of families of monotone polynomials computed by
$\poly(n)$-sized monotone circuits of depth $k$ and width $k$
respectively. Note that, for each $k\in\Z^+$ we have
$\Depth_k\subseteq\Width_k$. Moreover, from the definition of
$P^\ell_k$, we see that the family $\{P^{\lfloor
  n^{1/2k}\rfloor}_k\}_n\in\Depth_{2k}$. Finally, in Theorem
\ref{lbound1} we have shown that the family $\{P^{\lfloor
  n^{1/2k}\rfloor}_k\}_n\notin\Width_k$, for constant $k$. Hence, we
have the following corollary of Theorem \ref{lbound1}.

\begin{corollary}
  For any fixed $k\in\Z^+$, $\Width_k\subsetneq\Width_{2k}$ and
  $\Depth_k\subsetneq\Depth_{2k}$.
\end{corollary}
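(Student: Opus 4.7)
The plan is to promote the circuit-size lower bound of Theorem~\ref{lbound1} into an asymptotic separation by exhibiting the explicit family $f_n = P^{\lfloor n^{1/2k}\rfloor}_k$, padded with inert $0$-variables so that each $f_n$ is formally a polynomial in $n$ variables. Since $\Depth_k\subseteq\Width_k$ holds for every $k$, it would suffice to establish two things: $\{f_n\}\in\Depth_{2k}$ and $\{f_n\}\notin\Width_k$. The first yields $\{f_n\}\in\Depth_{2k}\subseteq\Width_{2k}$, and the second rules $\{f_n\}$ out of both $\Depth_k$ and $\Width_k$, giving both hierarchy separations simultaneously.

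For the upper bound I would unfold the recursive definition of $P^\ell_k$ into a monotone unbounded-fanin $\Sigma\Pi$-alternating formula of depth $2k$ and size $O(\ell^k)$, and then substitute $\ell=\lfloor n^{1/2k}\rfloor$, which yields a size of $O(n^{1/2})$, polynomial in $n$. This places $\{f_n\}\in\Depth_{2k,c}$ for a suitable constant $c$, hence $\{f_n\}\in\Depth_{2k}$. The standard depth-to-width simulation already invoked in Proposition~\ref{width_2k_upper_bound} would place $\{f_n\}\in\Width_{2k}$ directly, although this is also immediate from the inclusion $\Depth_{2k}\subseteq\Width_{2k}$.

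For the lower bound I would apply Theorem~\ref{lbound1} to any hypothetical width-$k$ monotone circuit $C$ computing $f_n$. Since $C$ computes $P^\ell_k$ exactly, $\mon(C)=\mon(P^\ell_k)$, so the two hypotheses of the theorem are satisfied automatically; hence, for every sufficiently large $n$, the size of $C$ is at least $2^\ell/10 = 2^{\lfloor n^{1/2k}\rfloor}/10$. This dominates every polynomial in $n$, so $\{f_n\}\notin\Width_{k,c}$ for any $c>0$, i.e., $\{f_n\}\notin\Width_k$.

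I do not expect any substantive obstacle. The one bookkeeping point worth noting is that the padding with $0$-variables is harmless for the lower bound: fixing the extra variables to $0$ in any purported width-$k$ circuit for the padded $f_n$ produces a width-$k$ circuit for the unpadded $P^{\lfloor n^{1/2k}\rfloor}_k$ of no larger size, so the Theorem~\ref{lbound1} bound transfers intact. Putting the two bounds together yields $\{f_n\}\in\Depth_{2k}\setminus\Depth_k$ and $\{f_n\}\in\Width_{2k}\setminus\Width_k$, which is the corollary.
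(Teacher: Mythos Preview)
Your proposal is correct and follows essentially the same route as the paper: exhibit the family $\{P^{\lfloor n^{1/2k}\rfloor}_k\}_n$, observe it lies in $\Depth_{2k}$ by construction, and use Theorem~\ref{lbound1} together with $\Depth_k\subseteq\Width_k$ to exclude it from $\Width_k$ (and hence from $\Depth_k$). Your added bookkeeping about padding with inert variables and about why the hypotheses of Theorem~\ref{lbound1} are met when $C$ computes $P^\ell_k$ exactly is sound and simply makes explicit what the paper leaves implicit.
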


Theorem \ref{lbound1} can also be used to give a separation between
the power of circuits of width (respectively, depth) $k$ and $k+1$ for
infinitely many $k$. We now state this separation. For any $k\in\nat$
and any function $f:\nat\rightarrow\nat$, let us denote by $f^k$ the
\emph{$k$-th iterate} of $f$, i.e the function $\underbrace{f\circ
f\circ \ldots \circ f}_{k\ \mathrm{ times }}$. Given non-decreasing
functions $f,g:\nat\rightarrow\nat$, call $f$ a \emph{sub $1/k$-th
iterate of $g$} if $f^k(n) < g(n)$, for large enough $n$ (closely
related notions have been defined in \cite{Sz61} and \cite{RR97}). It
can be verified that sub $1/k$-th iterates of exponential functions
can grow fairly quickly: for example, for any $\varepsilon>0$ and any
$k,c\in\nat$, the function $2^{(\log n)^c}$ is a sub $1/k$-th iterate
of $2^{n^\varepsilon}$.

We now state the precise separation that can be inferred from the
above theorem. For any $k,n\in\nat$ with $k\geq2$ and any polynomial
$p\in\real[x_1,x_2,\ldots,x_n]$, let $w_k(p)$ (resepctively $d_k(p)$)
denote the size of the smallest monotone width $k$ (respectively depth
$k$) circuit that computes $p$. 

\begin{corollary}
	There is an absolute constant $\alpha>0$ such that the following
	holds.  Fix any $k\in\nat$ where $k\geq 2$. Also, fix any
	non-decreasing function $f:\nat\rightarrow\nat$ that is a sub
	$1/k$-th iterate of $2^{\alpha n^{1/2k}}$. Then, for
	large enough $n$, there is a monotone polynomial
	$p\in\real[x_1,x_2,\ldots,x_n]$ such that for some $k',k''\in
	\{k,k+1,\ldots,2k-1\}$, $w_{k'}(p) \geq f(w_{k'+1}(p))$ and
	$d_{k''}(p) \geq f(d_{k''+1}(p))$.
\end{corollary}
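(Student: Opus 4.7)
The plan is to derive the corollary from Theorem~\ref{lbound1} together with a matching width-$2k$ upper bound for $P^\ell_k$ via a straightforward pigeonhole over the $k$ consecutive widths $k, k+1, \ldots, 2k$. Fix $k \geq 2$ and let $n$ be large. I would set $\ell = \lfloor n^{1/2k}\rfloor$ and take $p = P^\ell_k$, viewed as a polynomial on $n$ variables (adding unused dummy variables as needed). Theorem~\ref{lbound1} then yields $w_k(p) \geq 2^\ell/10 = 2^{\Omega(n^{1/2k})}$. For the upper bound, the natural recursive construction (two accumulator registers for each outer sum-of-products layer, recursing into the remaining $2k-2$ registers for each $P^\ell_{k-1}$ subproblem) produces a monotone width-$2k$ circuit for $P^\ell_k$ of size $T(k)$, where $T(k) = \ell^2\,T(k-1) + O(\ell^2)$ and hence $T(k) = O(\ell^{2k}) = O(n)$ with an \emph{absolute} hidden constant $C$. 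The same construction bounds $d_{2k}(p)$, while the standard depth-to-width simulation gives $d_k(p) \geq \Omega(w_k(p)) = 2^{\Omega(n^{1/2k})}$.

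Suppose, toward contradiction, that $w_{k'}(p) < f(w_{k'+1}(p))$ for every $k' \in \{k,\ldots,2k-1\}$. Since $f$ is non-decreasing, iterating this chain of inequalities $k$ times yields $w_k(p) < f^k(w_{2k}(p))$. If $w_{2k}(p)$ is below the threshold past which the sub $1/k$-th iterate property of $f$ activates, then $f^k(w_{2k}(p))$ is a constant independent of $n$, immediately contradicting $w_k(p) = 2^{\Omega(n^{1/2k})}$ for large $n$. Otherwise,
\[
w_k(p) \;<\; f^k(w_{2k}(p)) \;<\; 2^{\alpha\, w_{2k}(p)^{1/2k}} \;\leq\; 2^{\alpha\, C^{1/2k}\, n^{1/2k}}.
\]
Choosing the absolute constant $\alpha > 0$ with $\alpha \leq \frac{1}{2}C^{-1/4}$ forces $\alpha C^{1/2k} \leq \frac{1}{2}$ for every $k \geq 2$ (using $C^{1/2k} \leq C^{1/4}$ for $C \geq 1$ and $k \geq 2$), so the right-hand side is at most $2^{n^{1/2k}/2}$, again contradicting $w_k(p) \geq 2^{n^{1/2k}}/10$ for $n$ large. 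This produces the required $k'$, and the depth claim follows from the identical pigeonhole argument with $w_{k'}(p), w_{k'+1}(p), w_{2k}(p)$ replaced by $d_{k''}(p), d_{k''+1}(p), d_{2k}(p)$.

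The whole argument is a pigeonhole with a contradiction via iterating $f$, so the main obstacle is the bookkeeping needed to guarantee that the constant $\alpha$ can be taken \emph{absolute} in $k$. This in turn reduces to the claim that the width-$2k$ monotone circuit size of $P^\ell_k$ is $O(n)$ with a hidden constant that does not blow up with $k$, which is exactly what the explicit recursion $T(k) = \ell^2\, T(k-1) + O(\ell^2)$ delivers; once $C$ is absolute the choice $\alpha = \frac{1}{2}C^{-1/4}$ is absolute as well, and the ``large enough $n$'' clause absorbs any remaining dependence on $k$ and $f$.
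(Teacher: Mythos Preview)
Your proposal is correct and takes essentially the same approach as the paper: set $p = P^{\lfloor n^{1/2k}\rfloor}_k$, combine the Theorem~\ref{lbound1} lower bound on $w_k(p)$ with the $O(n)$ width-$2k$ upper bound (the paper cites Proposition~\ref{width_2k_upper_bound}; you spell out the recursion $T(k)=\ell^2 T(k-1)+O(\ell^2)$), and then pigeonhole over the $k$ intermediate widths---the paper phrases this directly as $w_k(p)\geq g(w_{2k}(p))>f^k(w_{2k}(p))$ while you take the contrapositive, but the logic is identical. Your explicit bookkeeping showing the upper-bound constant $C$ is absolute and your concrete choice $\alpha=\tfrac{1}{2}C^{-1/4}$ fill in details the paper leaves implicit; the depth case is handled in both proofs by the depth-to-width simulation (the paper uses the $O(s^k)$ conversion rather than your $O(s)$ claim, but arrives at the same conclusion).
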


\begin{proof}
	Let $p$ denote the monotone polynomial $P^{\lfloor
	n^{1/2k}\rfloor}_k\in\real[x_1,x_2,\ldots,x_n]$. Theorem
	\ref{lbound1} tells us that $w_k(p) =\Omega(2^{\lfloor
	n^{1/2k}\rfloor})$. To obtain a lower bound on $d_k(p)$, note that
	any polynomial computed by a circuit of size $s$ and depth $k$ can
	be computed by a width $k$ circuit of size $O(s^k)$; this tells us
	that $d_k(p) = 2^{\Omega(n^{1/2k})}$. Hence, there is some constant
	$\beta>0$ such that $\min\{w_k(p),d_k(p)\} \geq 2^{\beta n^{1/2k}}$,
	for large enough $n$.
	
	By definition, $p = P^{\lfloor n^{1/2k}\rfloor}_k$ has a depth
	$2k$ circuit of size $O(n)$, i.e $d_{2k}(p) = O(n)$. Proposition
	\ref{width_2k_upper_bound} tells us that $w_{2k}(p) = O(n)$ also.
	Hence, for some constant $\gamma>0$ and large enough $n$, we
	have $\max\{w_{2k}(p),d_{2k}(p)\}\leq \gamma n$. 

	The above statements imply that $w_k(p) \geq g(w_{2k}(p))$ and
	$d_k(p) \geq g(d_{2k}(p))$, where $g(n) = 2^{\alpha n^{1/2k}}$ for
	some constant $\alpha > 0$ and $n$ is large enough. Now, fix any
	non-decreasing function $f:\nat\rightarrow\nat$ that is a sub $1/k$-th
	iterate of $g$. We see that $w_k(p) \geq g(w_{2k}(p)) >
	f^k(w_{2k}(p))$ for large enough $n$; clearly, this implies that for some
	$k'\in\{k,k+1,\ldots,2k-1\}$, we must have $w_{k'}(p) \geq f(w_{k'+1}(p))$.
	Similarly, there is also a $k''\in\{k,k+1,\ldots,2k-1\}$ such that
	$d_{k''}(p) \geq f(d_{k''+1}(p))$.
\end{proof}

Similar corollaries hold for noncommutative circuits too. We define
the polynomial $P^\ell_k$ in exactly the same way in the
noncommutative setting. Note that any monotone bounded width
noncommutative circuit computing $P^\ell_k$ automatically gives us a
monotone commutative circuit of the same size and width computing the
commutative version of $P^\ell_k$. Hence, the lower bound of Theorem
\ref{lbound1} also holds for noncommutative width-$k$ circuits. For
$k\in\Z^+$, let $\NCDepth_k$ and $\NCWidth_k$ denote the set of
families of monotone polynomials $\{f_n\in\real\langle
x_1,x_2,\ldots,x_n\rangle\ |\ n\in\Z^+\}$ computed by $\poly(n)$-sized
monotone (noncommutative) circuits of depth $k$ and width $k$
respectively. Analogous to the commutative case, we obtain the
following.

\begin{corollary}
  For any fixed $k\in\Z^+$, $\NCWidth_k\subsetneq\NCWidth_{2k}$ and
  $\NCDepth_k\subsetneq\NCDepth_{2k}$.
\end{corollary}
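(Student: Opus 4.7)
The plan is to reduce the noncommutative separation to the commutative one via a direct \emph{commutativization} argument. Define the noncommutative $P^\ell_k\in\real\angle{x_1,\ldots,x_{\ell^{2k}}}$ by the same recursion as in Section~\ref{mon}, now reading every $\prod_{j=1}^\ell$ as an ordered product in the free noncommutative ring. The recursion partitions the $\ell^{2k}$ variables into disjoint blocks at every level, so every monomial appearing in the expansion of $P^\ell_k$ is multilinear and supported on a unique subset of variables. Forgetting the order of the letters therefore gives a bijection between the set of noncommutative monomials of $P^\ell_k$ and the set of commutative monomials of the $P^\ell_k\in\real[x_1,\ldots,x_{\ell^{2k}}]$ from Section~\ref{mon}; in particular, the commutative image of the noncommutative $P^\ell_k$ is exactly the commutative $P^\ell_k$.

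Next, observe that the commutativization map $\real\angle{X}\twoheadrightarrow\real[X]$ is an $\real$-algebra homomorphism, and so applying it gate-by-gate turns any monotone noncommutative arithmetic circuit $\widetilde{C}$ into a monotone commutative circuit $C$ of the \emph{same} size, width, and depth, computing the commutative image of the polynomial computed by $\widetilde{C}$. Combined with the bijection above, this implies that every monotone noncommutative width-$k$ circuit computing noncommutative $P^\ell_k$ yields a monotone commutative width-$k$ circuit of the same size computing commutative $P^\ell_k$. Consequently, Theorem~\ref{lbound1} transfers verbatim: for sufficiently large $\ell$, any monotone noncommutative width-$k$ circuit for $P^\ell_k$ has size at least $2^\ell/10$.

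The separation then follows as in the commutative case. Set $\ell=\lfloor n^{1/2k}\rfloor$ and consider the family $\{P^\ell_k\}_n$ on $n=\ell^{2k}$ noncommuting variables. The recursive definition gives a depth-$2k$ monotone noncommutative formula of size $O(n)$, and the construction behind Proposition~\ref{width_2k_upper_bound} is insensitive to the commutation relations, so it also produces a width-$2k$ monotone noncommutative circuit of size $O(n)$; hence the family lies in $\NCDepth_{2k}\cap\NCWidth_{2k}$. The transferred lower bound rules out polynomial-size monotone noncommutative width-$k$ circuits, and combining it with the standard simulation of a depth-$k$ size-$s$ circuit by a width-$k$ circuit of size $O(s^k)$ (which, like the other simulations, is commutativity-oblivious) also rules out polynomial-size monotone noncommutative depth-$k$ circuits. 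This yields both $\NCWidth_k\subsetneq\NCWidth_{2k}$ and $\NCDepth_k\subsetneq\NCDepth_{2k}$.

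I do not expect any real obstacle; the whole argument rests on the simple fact that monotone noncommutative circuits commutativize gate-by-gate together with the crucial design feature of $P^\ell_k$ — that it is built over \emph{disjoint} variable blocks at every level of the recursion — which guarantees that commutativization neither loses nor merges monomials. The only point worth double-checking is that the $O(\ell^k)$-sized circuit of Proposition~\ref{width_2k_upper_bound} can really be written without reordering variables within a product; but the proposition's construction only simulates a depth-$2k$ monotone formula layer by layer, and such a simulation trivially preserves the order in which inputs are fed to each product gate.
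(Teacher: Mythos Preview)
Your proposal is correct and follows essentially the same route as the paper: define the noncommutative $P^\ell_k$ identically, observe that gate-by-gate commutativization sends any monotone noncommutative circuit to a monotone commutative circuit of the same size, width, and depth computing the commutative $P^\ell_k$, and thereby transfer the lower bound of Theorem~\ref{lbound1}; the upper bounds and the containment $\Depth_k\subseteq\Width_k$ are commutativity-oblivious, so both separations follow. Your write-up is in fact more careful than the paper's one-line justification, spelling out explicitly why the commutative image of the noncommutative $P^\ell_k$ coincides with the commutative $P^\ell_k$ (multilinearity on disjoint variable blocks), but the approach is the same.
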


And finally, we observe that the separations between width and depth
$k$ and $k+1$ that hold in the commutative monotone case also hold in
the noncommutative monotone case. Define, for any $k,n\in\nat$ with
$k\geq2$ and any polynomial $p\in\real\langle
x_1,x_2,\ldots,x_n\rangle$, let $ncw_k(p)$ (resepctively $ncd_k(p)$)
denote the size of the smallest monotone width $k$ (respectively depth
$k$) circuit that computes $p$. We have the following.

\begin{corollary}
	There is an absolute constant $\alpha>0$ such that the following
	holds.  Fix any $k\in\nat$ where $k\geq 2$. Also, fix any
	non-decreasing function $f:\nat\rightarrow\nat$ that is a sub
	$1/k$-th iterate of $2^{\alpha n^{1/2k}}$. Then, for
	large enough $n$, there is a monotone polynomial
	$p\in\real\langle x_1,x_2,\ldots,x_n\rangle$ such that for some $k',k''\in
	\{k,k+1,\ldots,2k-1\}$, $ncw_{k'}(p) \geq f(ncw_{k'+1}(p))$ and
	$ncd_{k''}(p) \geq f(ncd_{k''+1}(p))$.
\end{corollary}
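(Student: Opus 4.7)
The plan is to mirror the proof of the preceding (commutative) corollary, using abelianization to transfer the hard part to the commutative setting where Theorem \ref{lbound1} already does all the work. Specifically, I would take $p$ to be the noncommutative polynomial $P^{\lfloor n^{1/2k}\rfloor}_k \in \real\angle{x_1,x_2,\ldots,x_n}$, defined by exactly the same recursion used in Section \ref{mon}, and show separately that (i) this $p$ has small width-$2k$ and depth-$2k$ noncommutative monotone circuits and (ii) any monotone noncommutative width-$k$ or depth-$k$ circuit for $p$ must be very large.

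For (i), notice that the natural depth-$2k$ monotone formula of size $O(\ell^{2k})$ read off from the recursive definition of $P^\ell_k$ never swaps the order of multiplications, so it gives $ncd_{2k}(p) = O(n)$ directly. The width-$2k$ construction in Proposition \ref{width_2k_upper_bound} simulates exactly that formula by a width-$2k$ circuit, and the simulation preserves the multiplication order; hence $ncw_{2k}(p) = O(n)$. For (ii), I would invoke the observation highlighted in the paragraph immediately preceding the statement: a monotone noncommutative circuit computing $p$ abelianizes (by reinterpreting each product gate as a commutative product) into a monotone commutative circuit of the same size and width that computes the commutative $P^{\lfloor n^{1/2k}\rfloor}_k$, and maps the condition $\mon(C) \subseteq \mon(P^\ell_k)$ on noncommutative monomials to the analogous condition on commutative monomials. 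Theorem \ref{lbound1} then gives $ncw_k(p) = \Omega(2^{\lfloor n^{1/2k}\rfloor})$, and the standard conversion of a monotone depth-$k$ circuit of size $s$ into a monotone width-$k$ circuit of size $O(s^k)$ promotes this to $ncd_k(p) = 2^{\Omega(n^{1/2k})}$.

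Combining, there are constants $\beta,\gamma>0$ with $\min\{ncw_k(p),ncd_k(p)\} \geq 2^{\beta n^{1/2k}}$ and $\max\{ncw_{2k}(p),ncd_{2k}(p)\} \leq \gamma n$ for large $n$. Choosing $\alpha>0$ so that $g(n) := 2^{\alpha n^{1/2k}}$ satisfies $ncw_k(p) \geq g(ncw_{2k}(p))$ and $ncd_k(p) \geq g(ncd_{2k}(p))$, I would then copy the iterate-pigeonhole step from the commutative proof: if $f$ is a sub $1/k$-th iterate of $g$, then $f^k(ncw_{2k}(p)) < g(ncw_{2k}(p)) \leq ncw_k(p)$, so across the chain $ncw_k(p), ncw_{k+1}(p), \ldots, ncw_{2k}(p)$ at least one consecutive pair must have ratio at least $f$; this yields the required $k'\in\{k,\ldots,2k-1\}$, and the same reasoning applied to $ncd$ produces $k''$. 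The only step that needs a careful check is the abelianization argument, i.e.\ confirming that monotonicity, layered structure, width, and the monomial-containment hypothesis all survive the identification $\F\angle{X}\to\F[X]$; once that is granted, the rest is a direct replay of the commutative corollary.
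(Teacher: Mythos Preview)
Your proposal is correct and matches the paper's approach exactly. The paper does not even spell out a separate proof for this corollary; it simply remarks that the noncommutative $P^\ell_k$ is defined identically, that abelianization transfers any monotone noncommutative circuit to a commutative one of the same size and width (so Theorem~\ref{lbound1} applies), and that the remaining upper bounds and the iterate-pigeonhole argument carry over verbatim from the commutative corollary---precisely what you outline.
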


\section{Identity testing for constant width circuits}\label{pit}

In this section we study polynomial identity testing for
constant-width commutative circuits. Impagliazzo and Kabanets
\cite{KI03} showed that derandomizing polynomial identity testing is
equivalent to proving arithmetic circuit lower bounds. Specifically,
assuming that there are explicit polynomials that require
superpolynomial size arithmetic circuits, they use these polynomials
in a Nisan-Wigderson type ``arithmetic'' pseudorandom generator that
can be used to derandomized polynomial identity testing.  This idea
was refined by Dvir et al \cite{DSY} to show that if there are
explicit polynomials that require superpolynomial size constant-depth
arithmetic circuits then polynomial identity testing for
constant-depth arithmetic circuits can be derandomized (the precise
statement involves the depth parameter explicitly \cite{DSY}).

In this section we prove a similar result showing that hardness for
constant-width arithmetic circuits yields a derandomization of
polynomial identity testing for constant-width circuits. We say that a
family of multilinear polynomials $\{P_n\}_{n>0}$ where
$P_n(\x)\in\F[x_1,\cdots,x_n]$ is \emph{explicit} if the coefficient
of each monomial $m$ of the polynomial $P_n$ can be computed in time
$2^{n^{O(1)}}$.

Recall the notion of a staggered arithmetic circuit (Definition
\ref{def_arith_ckt}).

\begin{lemma}\label{homcomp}
	Let $f\in\F[x_1,x_2,\cdots,x_n]$ of degree $m$ be computed by a
	staggered arithmetic circuit of size $s$ and width $w$. Then
	$H_i(f)$ (the $i^{th}$ homogeneous component of $f$) can be computed
	by a staggered circuit of size $\poly(s,m)$ and width $w+O(1)$, provided $\F$
	has at least $\deg(f)+1$ many elements.
\end{lemma}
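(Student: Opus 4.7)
The plan is to use the standard interpolation trick for extracting homogeneous components. Define $\tilde{f}(t,\x) := f(tx_1, \ldots, tx_n) = \sum_{j=0}^{m} t^j H_j(f)$. Fix $m+1$ distinct scalars $t_0, \ldots, t_m \in \F$, which exist since $|\F| \geq m+1$. Lagrange interpolation then provides explicit scalars $\alpha_{i,0}, \ldots, \alpha_{i,m}$ (depending only on $i$ and the $t_j$'s) with $H_i(f) = \sum_{j=0}^{m} \alpha_{i,j}\, \tilde{f}(t_j,\x)$. So it suffices to build a staggered circuit of width $w + O(1)$ and size $\poly(s,m)$ that evaluates this weighted sum.

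First I would show that for each fixed $j$, the polynomial $\tilde{f}(t_j,\x)$ is computable by a staggered circuit of width $w + O(1)$ and size $O(s)$. Viewing the given staggered circuit for $f$ as a straight-line program on $w$ registers, I introduce a constant number of auxiliary ``scratch'' registers to simulate the input scaling without touching the leaves. Whenever a non-copy gate of the original program takes a leaf $x_k$ as an operand, I prepend a new layer whose non-copy gate loads $t_j \cdot x_k$ into a scratch register, and then modify the original operation to read from that scratch register in place of $x_k$. Since a non-copy gate has fan-in two, two scratch registers suffice, and the number of layers grows by at most a constant factor.

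Next I would chain these $m+1$ scaled evaluations together using one further register as an accumulator (initially $0$). Once the scaled evaluation for $t_j$ leaves its output in some working register $R$, two extra layers suffice: one overwriting $R$ with $\alpha_{i,j} \cdot R$, and one updating the accumulator as $\mathrm{acc} \leftarrow \mathrm{acc} + R$. The working registers can then be freely overwritten by the scaled evaluation for $t_{j+1}$, while the accumulator is preserved across all intermediate layers by the staggered copy rule $\mathrm{acc} \times 1$. At the end the accumulator holds $H_i(f)$.

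The only real subtlety --- and the main place where bookkeeping must be done carefully --- is preserving the staggered property (at most one non-copy gate per layer) throughout the input scaling and accumulation phases. This is handled cleanly by dedicating a fresh layer to each inserted scalar multiplication or addition, at the cost of only a constant number of extra registers (scratch plus accumulator) and a constant-factor blow-up in the layer count per evaluation. The resulting circuit has width $w + O(1)$ and total size $O(ms) + O(m) = \poly(s,m)$, as required.
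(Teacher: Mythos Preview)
Your proposal is correct and follows essentially the same approach as the paper: both introduce a scaling variable, write $f(tx_1,\ldots,tx_n)=\sum_j t^j H_j(f)$, evaluate at $m+1$ distinct scalars, and recover $H_i(f)$ as a linear combination via the inverse Vandermonde. Your write-up is in fact more careful than the paper's on the point of maintaining the staggered form (scratch registers for the scaled inputs, an accumulator for the running sum), whereas the paper simply asserts that $g(\x,z)$ ``clearly'' has a staggered circuit of width $w+O(1)$ and concludes.
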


\begin{proof}
Define a new polynomial $g(\x,z)\in\F[x_1,x_2,\cdots,x_n,z]$ as
$g(\x,z)=f(x_1z,x_2z,\cdots,x_nz)$.

We can write $f(x_1z,x_2z,\cdots,x_nz)=\sum_{i=0}^m H_i(f)z^i$ where
$m=\deg(f)$ and $H_i(f)$ is the $i^{th}$ homogeneous part of $f$.
Let $\{z_0,z_1,\cdots,z_m\}$ be $m+1$ distinct field elements.
Consider the matrix $M$ defined as
\[
M~=~\left( \begin{array}{lr}
       1~ z_0~ z^2_0~ & \cdots~ z^m_0 \\
       1~ z_1~ z^2_1~ & \cdots~ z^m_1 \\
       \cdots \cdots & \cdots \cdots \\
       1~ z_m~ z^2_m~ & \cdots~ z^m_m \\
       \end{array}   
\right).
\]
We have the system of equations
\[
M(H_0(f),H_1(f),\cdots,H_m(f))^T=(g(\x,z_0),g(\x,z_1),\cdots,g(\x,z_m))^T.
\]
Since $M$ is invertible, it follows that there are scalars
$a_{ij}\in\F$ such that $H_i(f)=\sum_{j=0}^m a_{ij}g(\x,z_j)$.

Since $f(\x)$ has a width $w$ circuit of size $s$, $g(\x,z)$ clearly
has a (staggered) circuit of width $w+O(1)$ of size $O(s)$. It follows easily from
the above equation for $H_i(f)$ that each $H_i(f)$ has a circuit of
width $w+O(1)$ and size $O(ms)$. 
\end{proof}

\begin{lemma}\label{pderiv}
  Let $P(x_1,x_2,\cdots,x_n,y)$ be a polynomial, over a sufficiently
	large field $\F$, computed by a width $w$ staggered circuit of size
	$s$. Suppose the maximum degree of $y$ in $P$ is $r$. Then for each
	$j$ the $j^{th}$ partial derivative $\pder{j}{P}{y}$ can be computed
	by a staggered circuit of width $w+O(1)$ and size $(rs)^{O(1)}$.
\end{lemma}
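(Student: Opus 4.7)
The plan is to mimic the interpolation trick used in Lemma~\ref{homcomp}, applied to $y$ rather than to a fresh scaling variable $z$. Write
\[
P(x_1,\ldots,x_n,y) \;=\; \sum_{i=0}^r P_i(\x)\, y^i,
\]
where each $P_i \in \F[x_1,\ldots,x_n]$. Then
\[
\pder{j}{P}{y} \;=\; \sum_{i=j}^r \frac{i!}{(i-j)!}\, P_i(\x)\, y^{i-j}.
\]
Hence it suffices to build small staggered circuits for the coefficient polynomials $P_i$ and then to accumulate the scalar-weighted monomials $\tfrac{i!}{(i-j)!}P_i\, y^{i-j}$ into a single output.

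For the first step, I would pick $r+1$ distinct field elements $z_0,\ldots,z_r\in\F$ (possible since $\F$ is large enough) and invoke the Vandermonde argument from Lemma~\ref{homcomp}: the system $P(\x,z_\ell)=\sum_i P_i(\x) z_\ell^i$ for $\ell=0,\ldots,r$ has an invertible Vandermonde coefficient matrix, so there are scalars $a_{i\ell}\in\F$ with
\[
P_i(\x) \;=\; \sum_{\ell=0}^r a_{i\ell}\, P(\x,z_\ell).
\]
Each $P(\x,z_\ell)$ is obtained from the given staggered circuit for $P$ by substituting the constant $z_\ell$ for $y$, giving a staggered circuit of width $w$ and size $s$. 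Feeding these $r+1$ evaluations one after another into an accumulator register (so an extra register is used to hold the running sum $\sum_\ell a_{i\ell} P(\x,z_\ell)$) yields a staggered circuit of width $w+O(1)$ and size $O(rs)$ computing $P_i$.

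For the second step, I would concatenate the circuits for $P_j,P_{j+1},\ldots,P_r$, feeding the outputs one at a time into a separate accumulator register that maintains the running value of $\pder{j}{P}{y}$. To multiply $P_i$ by the scalar $i!/(i-j)!$ and by the monomial $y^{i-j}$ only $O(r)$ extra gates per term are needed (the power $y^{i-j}$ is built incrementally in one register by repeated multiplication by $y$, and the scalar is absorbed into the Vandermonde coefficients $a_{i\ell}$). The overall circuit is a sequential composition of $O(r)$ staggered blocks of size $O(rs)$ each, bringing the total size to $(rs)^{O(1)}$ and keeping the width at $w+O(1)$.

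The main thing to verify along the way is that the staggered discipline --- at most one genuinely active gate per layer, everyone else a $u\times 1$ copy --- is preserved when we chain these blocks together and when we maintain the two auxiliary registers (one for $y^{i-j}$, one for the accumulator). This is the same pattern used to prove Lemma~\ref{lemma_staggered} and Lemma~\ref{homcomp}, so the bookkeeping is routine: each accumulator update is one sum gate per layer, and the remaining columns simply carry forward their previous values by multiplication with $1$. No genuine obstacle arises beyond making sure the constant-many additional registers suffice; the growth in width is $O(1)$ independent of $w,r,s$.
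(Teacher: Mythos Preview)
Your proposal is correct and follows essentially the same approach as the paper: write $P=\sum_{i=0}^r C_i(\x)y^i$, recover each $C_i$ by Vandermonde interpolation through substitutions $y\mapsto z_\ell$ (this is precisely what the paper means by ``as in Lemma~\ref{homcomp}''), and then assemble $\pder{j}{P}{y}=\sum_{i\ge j}a_{ij}C_i(\x)y^{i-j}$ with an accumulator. The paper's proof is terser but identical in substance, yielding width $w+O(1)$ and size $O(r^2 s)$.
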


\begin{proof}
  Let $P(\x,y)=\sum_{i=0}^rC_i(\x)y^i$. As in Lemma~\ref{homcomp} each
  $C_i(\x)$ can be computed by a width $w+O(1)$ staggered circuit of size
  $O(rs)$. Clearly, for each $j$ the polynomial $\pder{j}{P}{y}$
can be written as
\[
{\pder{j}{P}{y}}={\sum_{i=j}^r a_{ij} C_i(\x)y^{i-j}},
\]
for $a_{ij}\in\F$, where $a_{ij}$ are field elements that depend
\emph{only} upon $j$. Therefore, we can easily give a staggered circuit of size
$O(r^2s)$ and width $w+O(1)$ for each polynomial $\pder{j}{P}{y}$.
\end{proof}

The following lemma is proved in \cite{DSY}. For any polynomial $g\in
\F[x_1,x_2,\cdots,x_n]$ let $H_{\leq k}(g)=\sum_{i=0}^kH_i(g)$. 


\begin{lemma}{\rm\cite[Lemma 3.2]{DSY}}\label{dsy}
  Let $P\in\F[x_1,x_2,\cdots,x_n,y]$ and $\deg_y(P)=r$. Suppose
  $f\in\F[x_1,x_2,\cdots,x_n]$ such that
  $P(\overline{x},f(\overline{x}))=0$ and
  ${\pder{}{P}{y}}(\overline{0},f(\overline{0}))$ is equal to $\xi\neq
  0$. Let $P(\overline{x},y)=\sum_{i=1}^r C_i(\overline{x})y^i$. Then
  for each $k\geq 0$ there is a polynomial
  $Q_k\in\F[y_0,y_1,\cdots,y_r]$ such that 
\[
H_{\leq k}(f)=H_{\leq k}(Q_k(C_0,C_1,\cdots,C_r)).
\]
\end{lemma}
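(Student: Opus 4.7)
The plan is to establish the lemma via a Newton-iteration style Hensel lifting, modified so that the inverses that appear naturally are replaced by their truncated power-series expansions, thereby yielding a polynomial (as opposed to a rational) expression in $C_0,C_1,\ldots,C_r$.

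I would first set up the classical Newton iteration. Let $D(\x,y)=\partial P/\partial y$, set $\alpha=f(\overline{0})$, and consider the sequence $f_0=\alpha$ and $f_{j+1}=f_j - P(\x,f_j)/D(\x,f_j)$, viewed in the formal power series ring $\F[[x_1,\ldots,x_n]]$. The standard convergence analysis, via the Taylor expansion $P(\x,f_j)=P(\x,f)+D(\x,f)(f_j-f) + O((f_j-f)^2) = D(\x,f)(f_j-f)+O((f_j-f)^2)$ (using $P(\x,f)=0$) and $D(\x,f_j) = D(\x,f)+O(f_j-f)$, gives $f_{j+1}-f = O((f_j-f)^2)$. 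Since $f_0-f=O(\x^1)$ (both have constant term $\alpha$), induction shows $f_j - f = O(\x^{2^j})$, and in particular $H_{\leq k}(f_{j^\ast}) = H_{\leq k}(f)$ for any $j^\ast \geq \lceil \log_2(k+1)\rceil$.

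The second step is to eliminate the division, producing a polynomial in $C_0,\ldots,C_r$. By induction on $j$ one checks that $f_j(\overline 0) = \alpha$ for every $j$ (the correction $P/D$ has zero constant term because $P(\overline 0,\alpha)=0$ while $D(\overline 0,\alpha)=\xi\neq 0$), so $D(\x,f_j)$ has nonzero constant term $\xi$. Writing $v_j := (D(\x,f_j)-\xi)/\xi$, which has zero constant term, one has the formal identity
\[
\frac{1}{D(\x,f_j)}\;=\;\frac{1}{\xi}\sum_{n\geq 0}(-v_j)^n,
\]
and truncating this sum at $n=k$ changes the result only by a term of order $O(\x^{k+1})$. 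Define the modified iterates $\tilde f_0=\alpha$ and
\[
\tilde f_{j+1}\;=\;\tilde f_j \;-\; P(\x,\tilde f_j)\cdot\Bigl(\tfrac{1}{\xi}\sum_{n=0}^{k}(-\tilde v_j)^n\Bigr),\qquad \tilde v_j=(D(\x,\tilde f_j)-\xi)/\xi.
\]
Since $P(\x,y)=\sum_i C_i(\x)y^i$ and $D(\x,y)=\sum_i i C_i(\x)y^{i-1}$, each $\tilde f_j$ is a polynomial in $C_0,\ldots,C_r$ with coefficients in $\F$ (the only constants needed are $\alpha$, $\xi^{-1}$, and integer multipliers). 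In particular $\tilde f_{j^\ast}=Q_k(C_0,\ldots,C_r)$ for a polynomial $Q_k\in\F[y_0,\ldots,y_r]$.

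Finally, I would compare $\tilde f_j$ with the true Newton iterate $f_j$. The truncation error introduced at step $j$ is $P(\x,\tilde f_j)$ times an $O(\x^{k+1})$ term, and $P(\x,\tilde f_j)=O(\x^1)$ throughout the iteration, so the accumulated error between $\tilde f_{j^\ast}$ and $f_{j^\ast}$ lies in degree $\geq k+1$. Combined with $H_{\leq k}(f_{j^\ast})=H_{\leq k}(f)$ this yields $H_{\leq k}(f)=H_{\leq k}(\tilde f_{j^\ast})=H_{\leq k}(Q_k(C_0,\ldots,C_r))$, as required. The main technical obstacle is precisely this bookkeeping: one must simultaneously track (i) the quadratic convergence of the (exact) Newton iterates, (ii) the degree of the error introduced at each step by truncating the geometric series, and (iii) the fact that $P(\x,\tilde f_j)$ itself vanishes to positive order, so truncation errors compound in a controlled way and remain pushed past the target degree $k$.
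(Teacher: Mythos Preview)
The paper does not prove this lemma; it merely quotes it from \cite{DSY} (see the sentence ``The following lemma is proved in \cite{DSY}'' immediately preceding the statement). So there is no in-paper proof to compare against.

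That said, your proposal is essentially the argument of \cite{DSY}: Newton/Hensel lifting with the reciprocal of $D(\x,f_j)$ replaced by a truncated geometric series so that each iterate is a genuine polynomial in $C_0,\ldots,C_r$. The outline is sound. Two points worth tightening. First, the claim that ``the accumulated error between $\tilde f_{j^\ast}$ and $f_{j^\ast}$ lies in degree $\geq k+1$'' deserves an explicit induction: assuming $\tilde f_j-f_j=O(\x^{k+1})$, one checks that applying the exact Newton map $g\mapsto g-P(\x,g)/D(\x,g)$ preserves an $O(\x^{k+1})$ perturbation (this uses only that $D$ has constant term $\xi\neq 0$), while the additional truncation error at step $j$ is $P(\x,\tilde f_j)\cdot O(\x^{k+1})=O(\x^{k+2})$ since $P(\overline{0},\alpha)=0$. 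Second, the constants $\alpha=f(\overline 0)$ and $\xi^{-1}$ appearing in your $Q_k$ are fixed elements of $\F$ once $P$ and $f$ are fixed, so $Q_k\in\F[y_0,\ldots,y_r]$ as required; it is worth saying explicitly that $Q_k$ is allowed to depend on these data (the lemma only asserts existence). With those two clarifications the argument is complete and matches the intended one.
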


Using the above lemmata we prove our first theorem.

\begin{theorem}\label{circ-solve}
  Let $P\in\F[x_1,x_2,\cdots,x_n,y]$ and $\deg_y(P)=r\geq 1$ such that
  $P$ has a staggered circuit of size $s$ and width $w$. Suppose that
  $P(\overline{x},f(\overline{x}))=0$ for some polynomial
  $f\in\F[x_1,x_2,\cdots,x_n]$ with $\deg(f)=m$. Then $f$ has a
  staggered circuit of size $\poly(s,(m+r)^r)$ and width $w+O(1)$ if
  $char(\F)>r$ and $\F$ is sufficiently large.
\end{theorem}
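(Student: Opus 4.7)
The plan is to follow the Kabanets--Impagliazzo / Dvir--Shpilka--Yehudayoff template of root-finding by Newton iteration, carried out in the staggered-circuit model so that width is preserved up to $O(1)$. Given Lemma \ref{dsy}, most of the work is in (i) reducing to a simple root, (ii) shifting so that the hypotheses of Lemma \ref{dsy} hold at $\bar 0$, and (iii) accounting for the size of a circuit realizing the polynomial $Q_m$ and its composition with circuits for the coefficients $C_0,\ldots,C_r$.

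First, I would reduce to the case where $f(\x)$ is a simple root, i.e.\ $\pder{}{P}{y}(\x,f(\x))\not\equiv 0$. View $P$ as a polynomial in $y$ over the field $\F(\x)$ and let $j$ be the multiplicity of $y-f(\x)$ as a factor. Since $\mathrm{char}(\F)>r\ge j$, the polynomial $R:=\pder{j-1}{P}{y}$ has $y-f(\x)$ as a simple root, so $R(\x,f(\x))=0$ and $\pder{}{R}{y}(\x,f(\x))\not\equiv 0$. By Lemma \ref{pderiv}, $R$ has a staggered circuit of size $(rs)^{O(1)}$ and width $w+O(1)$. Second, since $\pder{}{R}{y}(\x,f(\x))$ is a nonzero polynomial in $\x$ of degree at most $m(r+1)$ and $\F$ is sufficiently large, I pick $\bar\beta\in\F^n$ with $\pder{}{R}{y}(\bar\beta,f(\bar\beta))\neq 0$. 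Setting $\gamma=f(\bar\beta)$, $\tilde R(\x,y):=R(\x+\bar\beta,y+\gamma)$ and $\tilde f(\x):=f(\x+\bar\beta)-\gamma$, I get $\tilde f(\bar 0)=0$, $\tilde R(\x,\tilde f(\x))=0$ and $\pder{}{\tilde R}{y}(\bar 0,0)\neq 0$. Substituting constants into leaf variables changes neither the width nor, significantly, the size of the staggered circuit for $R$, and a final outer shift by $(-\bar\beta,\gamma)$ will recover $f$ from $\tilde f$.

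Third, Lemma \ref{dsy} applied to $\tilde R$ and $\tilde f$ produces, for $k=m$, a polynomial $Q_m\in\F[y_0,\ldots,y_r]$ such that $\tilde f=H_{\leq m}(Q_m(\tilde C_0,\ldots,\tilde C_r))$, where $\tilde C_i$ is the coefficient of $y^i$ in $\tilde R$. The coefficients $\tilde C_i$ themselves are extracted, exactly as in Lemma \ref{homcomp}, by evaluating $\tilde R$ at $r+1$ distinct scalar values of $y$ and solving a Vandermonde system, giving staggered circuits for each $\tilde C_i$ of width $w+O(1)$ and size $\poly(r,s)$. A careful reading of the proof of Lemma \ref{dsy} in \cite{DSY} (which is essentially a truncated Newton iteration in the ring $\F[\x]/\langle\x\rangle^{m+1}$) shows that $Q_m$ can be taken to have both degree and number of monomials bounded by $(m+r)^{O(r)}$.

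Finally, I would assemble the circuit. Each monomial of $Q_m$ is a product of powers of $y_0,\ldots,y_r$ of total degree at most $(m+r)^{O(r)}$, so using repeated squaring and one accumulator register on top of the width-$(w+O(1))$ staggered circuits for the $\tilde C_i$, one can produce $Q_m(\tilde C_0,\ldots,\tilde C_r)$ via a staggered circuit of width $w+O(1)$ and size $\poly(s,(m+r)^r)$. Applying Lemma \ref{homcomp} once more extracts $H_{\leq m}$, and shifting back by $(-\bar\beta,\gamma)$ at the leaves and output yields the claimed circuit for $f$. The main obstacle is the third step: one must verify that the recursive construction of $Q_m$ from Lemma \ref{dsy} can indeed be realized with size $(m+r)^{O(r)}$ (rather than $m^{O(r\log m)}$ or worse), which requires truncating to degree $\le m$ at each Newton step and bookkeeping the substitution so that the final exponent depends only on $r$.
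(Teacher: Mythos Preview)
Your outline matches the paper's proof through the reduction to a simple root, the affine shift, the extraction of the coefficients $C_i$ via Vandermonde interpolation, and the final $H_{\le m}$ truncation. The difference is precisely at the step you flag as ``the main obstacle'': how to bound the cost of realizing $Q_m(C_0,\ldots,C_r)$.

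You propose to open up the proof of Lemma~\ref{dsy} and argue that, by truncating the Newton iterate in the $y$-variables at each step, the polynomial $Q_m$ itself can be kept to degree and monomial count $(m+r)^{O(r)}$. The paper avoids this entirely and treats $Q_m$ as a black box. The trick is to Taylor-expand $Q_m$ around the point $y^*=(C_0(0),\ldots,C_r(0))$:
\[
Q_m(C_0(\x),\ldots,C_r(\x))\;=\;\sum_{\alpha}Q_\alpha\prod_{i=0}^r\bigl(C_i(\x)-C_i(0)\bigr)^{\alpha_i}.
\]
Since each $C_i(\x)-C_i(0)$ has zero constant term, any term with $\sum_i\alpha_i>m$ contributes only monomials of $\x$-degree greater than $m$ and is annihilated by $H_{\le m}$. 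Hence one may restrict the sum to $\alpha\in I_m=\{\alpha:\sum_i\alpha_i\le m\}$, which has at most $(m+r)^{O(r)}$ elements; the $Q_\alpha$ are just field constants, so no bound on $\deg Q_m$ or its monomial count is ever needed. Each surviving term is then a scalar times a product of at most $m$ factors of the form $C_i(\x)-C_i(0)$, each computable in width $w+O(1)$, and one accumulator gives the sum.

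Your route and the paper's are essentially dual: truncating inside the Newton iteration is equivalent, after the change of variables $z_i=y_i-C_i(0)$, to observing that high-degree $z$-monomials vanish modulo $\langle\x\rangle^{m+1}$. But the paper's formulation is cleaner because it makes no use of how $Q_m$ was constructed, so there is no ``bookkeeping'' to do and no risk of the exponent degrading to $m^{O(r\log m)}$.
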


\begin{proof}
  First we argue that we can assume w.l.o.g., as in Dvir et al
  \cite{DSY}, that ${\pder{}{P}{y}}(\overline{0},f(\overline{0}))=\xi
  \neq 0$. If ${\pder{}{P}{y}}(x,f(x))\equiv 0$ we can replace $P$ by
  ${\pder{}{P}{y}}$. Since $char(\F)> r$ it is easy to see that there
  exists $j: 1\leq j\leq r$ such that
  ${\pder{j}{P}{y}}(x,f(x))\not\equiv 0$. Hence, we can assume
  ${\pder{}{P}{y}}(x,f(x))\not\equiv 0$. Therefore, there is an $a \in
  \F^n$ such that ${\pder{}{P}{y}}P(a,f(a))\neq 0$. We can assume that
  $a=0$ by appropriately shifting $P$ as in \cite{DSY}.  Let
\[
P(\overline{x},y)=\sum_{i=1}^r C_i(\overline{x})y^i.
\]
By Lemma~\ref{dsy} there is a polynomial $Q_k\in\F[y_0,\cdots,y_r]$
such that $H_{\leq k}(f)=H_{\leq k}(Q_k(C_0,C_1,\cdots,C_r))$ for each
$0\leq k\leq m$. Putting $k=m$ and letting $Q_m=Q$ we have
$f(\x)=H_{\leq m}(Q(C_0,C_1,\cdots,C_r))$.

Let $y^*=(C_0(0),\cdots,C_r(0))$ and $\deg(Q)=M$. Define
$I_M=\{(\alpha_0,\alpha_1,\cdots,\alpha_r)\mid \alpha_i\in\N, \sum
\alpha_i\leq M\}$. By expanding the polynomial $Q$ at the point $y^*$
we get $Q(\y)=\sum_{\overline{\alpha}\in
  I_M}Q_\alpha\prod_{i=0}^r(y_i-y^*_i)^{\alpha_i}$.

Thus, we can write
\[
f(\x)=H_{\leq m}[\sum_{\overline{\alpha}\in
  I_M}Q_\alpha\prod_{i=0}^r(C_i(\x)-C_i(0))^{\alpha_i}].
\]
As the constant term of $C_i(\x)-C_i(0)$ is zero, if we consider
$\prod_{i=1}^r (C_i(\x)-C_i(0))^{\alpha_i}$ for some
$\overline{\alpha}$ with $\sum_{i}\alpha_i>m$ then we will get
monomials of degree more than $m$ whose net contribution to $f(\x)$
must be zero. Hence we can write $f(\x)$ as
\[
f(\x)=H_{\leq m}[\sum_{\overline{\alpha}\in
  I_m}Q_\alpha\prod_{i=0}^r(C_i(\x)-C_i(0))^{\alpha_i}],
\]
where $I_m=\{(\alpha_0,\alpha_1,\cdots,\alpha_r)\mid \alpha_i\in\N,
\sum \alpha_i\leq m\}$. Clearly, $|I_m|\leq (m+r)^r$. Now, the
polynomial $\prod_{i=0}^r(y_i-y^*_i)^{\alpha_i}$ has a simple
$O(1)$-width circuit $C'$. We can compute
$\prod_{i=0}^r(C_i(\x)-C_i(0))^{\alpha_i}$ by plugging in the
staggered width $w+O(1)$ circuit for $C_i(\x)$ (as obtained in
Lemma~\ref{pderiv}) where $y_i$ is input to $C'$. Thus, we obtain a
circuit of width $w+O(1)$ for $\sum_{\overline{\alpha}\in I_m}
Q_\alpha\prod_{i=0}^r(C_i(\x)-C_i(0))^{\alpha_i}$ that is of size
polynomial in $s$ and $(m+r)^r$. By Lemma~\ref{homcomp} we can compute
its homogeneous components and their partial sums with constant
increase in width. Putting it together, it follows that $f(\x)$ can be
computed in width $w+O(1)$ of size polynomial in $s$ and $(m+r)^r$.
\end{proof}

We apply Theorem~\ref{circ-solve} to prove the main result of this
section.

\begin{theorem}\label{idtest} 
	There is a constant $c_1>0$ so that the following holds.  Suppose
	there is an explicit sequence of multilinear polynomials
	$\{P_m\}_{m>0}$ where $P_m(\x)\in\F[x_1,\cdots,x_m]$ and $P_m$
	cannot be computed by arithmetic circuits of width $w+c_1$ and size
  $2^{m^\epsilon}$, for constants $w\in\Z^+$ and $\epsilon>0$.
  Then, for any constant $c_2>0$, there is a deterministic $2^{(\log
    n)^{O(1)}}\cdot b^{O(1)}$ time algorithm that, when given as input
  a circuit $C$ of size $n^{O(1)}$ and width $w$ computing a
  polynomial $f(x_1,x_2,\ldots,x_n)$ of maximum coefficient size $b$,
  with each variable of individual degree at most $(\log n)^{c_2}$,
  checks if the polynomial computed by $C$ is identically zero,
  assuming that the field $\F$ is sufficiently large and $char(\F)>
  (\log n)^{c_2}$.
\end{theorem}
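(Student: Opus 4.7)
The plan is to apply the Kabanets--Impagliazzo hardness-to-randomness template \cite{KI03}, as adapted to bounded-depth circuits by Dvir--Shpilka--Yehudayoff \cite{DSY}, using the hard polynomials $\{P_m\}$ inside a Nisan--Wigderson style generator. Fix an explicit NW design $S_1,\ldots,S_n\subseteq[\ell]$ with $|S_i|=m$ and pairwise intersection $\leq\log n$, achievable with $\ell=O(m^2/\log n)$. Define $G(\bar y)=(P_m(\bar y|_{S_1}),\ldots,P_m(\bar y|_{S_n}))$. The algorithm evaluates $C(G(\bar y))$ at all points of a rectangular grid and accepts iff every grid value vanishes. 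Since each $y_j$ appears with degree at most $1$ in any $P_m(\bar y|_{S_i})$ (by multilinearity) and each $x_i$ has individual degree at most $(\log n)^{c_2}$ in $C$, the individual degree of $C\circ G$ in $y_j$ is at most $n(\log n)^{c_2}$. Choosing $m=(\log n)^d$ for a large constant $d=d(\epsilon)$ (fixed below) gives $\ell=(\log n)^{O(1)}$, so the grid has size $2^{(\log n)^{O(1)}}$; explicitness of $P_m$ allows each evaluation in time $2^{m^{O(1)}}\cdot b^{O(1)}=2^{(\log n)^{O(1)}}\cdot b^{O(1)}$, matching the claimed running time.

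Correctness is where the hardness hypothesis enters. Suppose toward a contradiction that $C\not\equiv 0$ but $C(G(\bar y))\equiv 0$. A standard hybrid argument applied to the intermediate maps $G_t(\bar y,x_{t+1},\ldots,x_n):=(P_m(\bar y|_{S_1}),\ldots,P_m(\bar y|_{S_t}),x_{t+1},\ldots,x_n)$ locates an index $t\in[n]$ with $C(G_{t-1})\not\equiv 0$ but $C(G_t)\equiv 0$. Using that $\F$ is large, one can then fix $\bar\beta\in\F^{[\ell]\setminus S_t}$ and $\gamma_{t+1},\ldots,\gamma_n\in\F$ preserving non-vanishing to obtain
\[
Q(\bar z,x_t)\ :=\ C\bigl(P_m(\bar y|_{S_1})|_{\bar\beta},\ldots,P_m(\bar y|_{S_{t-1}})|_{\bar\beta},\,x_t,\,\gamma_{t+1},\ldots,\gamma_n\bigr),
\]
where $\bar z=\bar y|_{S_t}$; by construction $Q\not\equiv 0$ in $\F[\bar z,x_t]$, yet $Q(\bar z,P_m(\bar z))\equiv 0$ identically in $\bar z$.

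The key point is that each restricted polynomial $P_m(\bar y|_{S_i})|_{\bar\beta}$ for $i<t$ depends only on the variables in $S_i\cap S_t$, a set of size at most $\log n$ by the design property. Being multilinear in these at most $\log n$ variables, it has at most $2^{\log n}=n$ monomials and therefore admits a staggered width-$2$ circuit of size $\poly(n)$. Converting $C$ to a staggered width-$(w{+}1)$ circuit via Lemma \ref{lemma_staggered} and then inlining these constant-width subcircuits at the appropriate leaves (reusing a fixed block of auxiliary registers for each leaf read) produces a staggered circuit for $Q$ of width $w+O(1)$ and size $\poly(n)$. Since $\deg_{x_t}(Q)\leq(\log n)^{c_2}$, Theorem \ref{circ-solve} then yields a staggered circuit for $P_m$ of width $w+O(1)$ and size $\poly\bigl(n,(m+(\log n)^{c_2})^{(\log n)^{c_2}}\bigr)=2^{(\log n)^{O(1)}}$. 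Choosing $d$ large enough so that $m^\epsilon=(\log n)^{d\epsilon}$ exceeds the $\log n$-exponent appearing in this bound contradicts the hardness hypothesis for $P_m$, with $c_1$ chosen to absorb the total width increase from Lemma \ref{lemma_staggered}, the leaf inlining, and Theorem \ref{circ-solve}.

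I expect the main obstacle to be the careful width bookkeeping during the leaf-inlining step: it must be done in the straight-line-program formulation of staggered circuits so that the resulting composition is itself staggered (a prerequisite for applying Theorem \ref{circ-solve}) and so that its width grows only additively. The NW design condition $|S_i\cap S_t|\leq\log n$ is precisely what keeps each inlined subcircuit of constant width and polynomial size, and it is therefore what makes the whole reduction go through.
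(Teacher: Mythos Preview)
Your proposal is correct and follows essentially the same argument as the paper's proof sketch: the Nisan--Wigderson generator with seed length $\ell=(\log n)^{O(1)}$ and stretch $n$, brute-force evaluation of $C\circ G$ on a grid, the hybrid argument to locate a failing index, the observation that the design's $\leq\log n$ intersections force each restricted $P_m(\bar y|_{S_i})|_{\bar\beta}$ into a width-$2$ staggered circuit of size $\poly(n)$, the use of Lemma~\ref{lemma_staggered} to stagger $C$, and finally Theorem~\ref{circ-solve} to extract a small width-$(w{+}O(1))$ circuit for $P_m$. The only cosmetic difference is that the paper phrases the brute-force check via Schwartz--Zippel on total degree rather than your per-variable grid, and your explicit accounting of the size bound $\poly(n,(m+(\log n)^{c_2})^{(\log n)^{c_2}})$ and the choice of $d$ so that $m^\epsilon$ dominates it is somewhat more detailed than the paper's sketch.
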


\begin{psketch}
	The overall construction is based on the Nisan-Wigderson
	construction as applied in Impagliazzo-Kabanets \cite{KI03} and Dvir
	et al \cite{DSY}. Hence it suffices to sketch the argument. 
\begin{enumerate}

\item Let $m=(\log n)^{c'(\epsilon,c_2)}$ and $\ell=(\log
  n)^{c''(\epsilon,c_2)}$ where $c''$ is suitably larger than $c'$.
\item Construct the Nisan-Wigderson design
  $S_1,\cdots,S_n\subset[\ell]$ such that $|S_i|=m$ for each $i$ and
  $|S_i\cap S_j|\leq \log n$.
\item Consider the polynomial
  $F(y_1,y_2,\cdots,y_\ell)=C(P_m(\y|S_1),P_m(\y|S_2),
  \cdots,P_m(\y|S_n))$.  For any input $\y\in\F^\ell$ we can evaluate
  $F$ by evaluating $P_m(\y|S_i)$ for each $i$ and then evaluating $C$
  on the resulting values. Since the $P_m$ are explicit polynomials
  and $|S_i|$ has polylog$(n)$ size we can evaluate $P_m$ in time
  $2^{(\log n)^{O(1)}}$.

\item We test if $F(\y)\equiv 0$ using a brute-force algorithm based
  on the Schwartz-Zippel lemma. Consider a finite set $S\subseteq \F$,
  such that $|S|$ is more than $\deg(F)$. Check if
  $F(\overline{a})\equiv 0$ for all $\overline{a} \in S^\ell$ in time
  $n^{O(\ell)}$.  If all the tests returned zero then return $C \equiv
  0$ otherwise $C \not\equiv 0$.
\end{enumerate}

The proof of correctness is exactly as in \cite{KI03,DSY}. Assuming
the algorithm fails, after hybridization and fixing variables in $C$,
we get a nonzero polynomial $F_2$ of the form 
\[
F_2(\y|S_{i+1},x_{i+1}) = F_1(P_m(\y|S_1\cap S_{i+1}),P_m(\y|S_2\cap
S_{i+1}),\cdots,P_m(\y|S_i\cap S_{i+1}),x_{i+1}).
\]
where $F_1(x_1,x_2,\ldots,x_{i+1})$ can be computed by a width
$w$ circuit of size $\poly(n)$ and
$F_2(\y|S_{i+1},P_m(\y|S_{i+1}))\equiv 0$. Note that the
multilinear polynomials $P_m(\y|S_j\cap S_{i+1})$ depend only on $\log
n$ variables and hence, they can be computed using brute force
width-$2$ staggered circuits of size $O(n\log n)$. Also, by
Lemma~\ref{lemma_staggered}, we know that $F_1$ can be computed
by a staggered circuit of size $\poly(n)$ and width at most
$w+1$. Putting the above circuits together, it is easy to see that
$F_2$ can be computed by a staggered circuit $C'$ of
size at most $\poly(n).n\log n = \poly(n)$ and width $w+O(1)$.  Now,
by applying Theorem~\ref{circ-solve} to $C'$, we get a circuit of
width $w+O(1)$ to compute $P_m$ contradicting the hardness assumption.
\end{psketch}

Finally, we observe that the following analogue of \cite[Theorem
4.1]{KI03} holds for bounded width circuits. The proof is in the
appendix.

\begin{proposition}
	\label{prop_ki_lbound}
One of the following three statements is false.
\begin{enumerate}
\item $\NEXP \subseteq \P/\poly$.
\item The Permanent polynomial is computable by polynomial size width
  $w$ arithmetic circuit over $\Q$, where $w$ is a constant.
\item The identity testing problem for bounded width arithmetic
  circuits over $\Q$ is in $\NSUBEXP$.
\end{enumerate}
\end{proposition}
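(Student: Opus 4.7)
The plan is to assume all three statements and derive a contradiction, following the blueprint of Kabanets--Impagliazzo \cite[Theorem 4.1]{KI03}, with statement (3) replacing the $\P$-time PIT hypothesis by an $\NSUBEXP$ one and (2) restricted to constant width. The target contradiction is $\NEXP \subseteq \NSUBEXP$, which violates the nondeterministic time hierarchy. The first step is immediate: by the Impagliazzo--Kabanets--Wigderson theorem, statement (1) yields $\NEXP = \mathrm{MA}$, and in particular $\NEXP \subseteq \mathrm{PH}$.

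The core step is to use (2) and (3) to place the decision version of Permanent in $\NSUBEXP$. By (2), for each $n$ there is a width-$w$ circuit of size $n^{O(1)}$ computing $\mathrm{Perm}_n$. On input $(X,y)$, I would nondeterministically guess a sequence of candidate width-$w$, polynomial-sized circuits $\widetilde{C}_1, \widetilde{C}_2, \ldots, \widetilde{C}_n$ and inductively verify that $\widetilde{C}_k$ computes $\mathrm{Perm}_k$. The base case is trivial. For the induction, I use the standard expansion along the first row,
\[
\mathrm{Perm}_k(X) \;=\; \sum_{i=1}^k x_{1,i}\cdot\mathrm{Perm}_{k-1}(X^{(1,i)}),
\]
where $X^{(1,i)}$ is the minor obtained by deleting the first row and the $i$-th column. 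Assuming $\widetilde{C}_{k-1}$ has been verified, it suffices to test the identity $\widetilde{C}_k(X) - \sum_{i=1}^k x_{1,i}\,\widetilde{C}_{k-1}(X^{(1,i)}) \equiv 0$. Plugging $X^{(1,i)}$ into $\widetilde{C}_{k-1}$ is just a relabelling of leaves, so each term $x_{1,i}\,\widetilde{C}_{k-1}(X^{(1,i)})$ is again a width-$w$ circuit of size $\poly(n)$. Summing over $i$ using one additional accumulator register yields a single width-$(w+O(1))$ circuit of polynomial size, to which (3) can be applied, giving an $\NSUBEXP$ identity test. After verifying all the $\widetilde{C}_k$'s, we evaluate $\widetilde{C}_n(X)$ and accept iff it equals $y$. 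The whole algorithm runs in $\NSUBEXP$, so the decision version of Permanent lies in $\NSUBEXP$.

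To close the loop, Toda's theorem gives $\mathrm{PH} \subseteq \P^{\#\P}$. Since Permanent is $\#\P$-complete and its value can be recovered bit-by-bit using polynomially many queries to the decision version (which is in $\NSUBEXP$), the usual oracle replacement shows $\P^{\#\P} \subseteq \NSUBEXP$. Thus $\mathrm{PH} \subseteq \NSUBEXP$, and combining with the first step we obtain $\NEXP = \mathrm{MA} \subseteq \mathrm{PH} \subseteq \NSUBEXP$, which contradicts the nondeterministic time hierarchy.

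The main point requiring care is the second step: the self-reducibility identity, once the guessed circuit $\widetilde{C}_{k-1}$ is substituted on $k$ different minors and the results are summed, must itself be expressible as a bounded-width circuit of polynomial size so that (3) is applicable. This is precisely where the closure of bounded-width circuits under accumulator-based summation (the same kind of closure used in Lemma~\ref{lemma_staggered}) is used: $k$ disjoint-input copies of a width-$w$ circuit combined by an accumulator give a width-$(w+O(1))$ circuit. Everything else is standard from \cite{KI03}.
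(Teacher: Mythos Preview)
Your proposal is correct and follows essentially the same approach as the paper's proof: guess a width-$w$ circuit for Permanent, verify it via the row-expansion self-reducibility identity using the assumed $\NSUBEXP$ PIT for bounded-width circuits (observing that the difference $B_k$ is itself a width-$(w+O(1))$ polynomial-size circuit), and then derive $\NEXP\subseteq\NSUBEXP$ in contradiction to the nondeterministic time hierarchy. The only cosmetic difference is that the paper routes the collapse as $\NEXP\to\NP^{\mathrm{Perm}}$ and shows $\P^{\mathrm{Perm}}\subseteq\NSUBEXP$, whereas you go $\NEXP=\mathrm{MA}\subseteq\mathrm{PH}\subseteq\P^{\#\P}$ and then replace the oracle; both are standard consequences of \cite{KI03} and lead to the same verification step.
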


\subsection*{Acknowledgements}
We would like to thank Amir Yehudayoff for pointing out the separation
in \cite{RY09} and for many valuable comments.

\newpage

\begin{center}
{\bf\Large Appendix}
\end{center}

\noindent
{\bf Proof Sketch of Lemma \ref{lemma_staggered}}

The circuit $C'$ is constructed by
showing how to compute, for $i\geq 1$, the polynomials computed in 
layer $i+1$ of $C$ from the polynomials computed in the $i$th layer in
$C$ in a staggered fashion, using at most $w$ layers of width at most
$w+1$. Equivalently, it amounts to designing a straight-line program
with $w+1$ registers such that: initially, $w$ of the registers
contain the polynomials computed in the $w$ nodes of the $i^{th}$
layer. In the end, $w$ of the $w+1$ registers will contain the
polynomials computed at the $i+1^{st}$ layer of $C$. Note that this is
trivial for $i = 2$ since all nodes in layer $2$ have only leaves
as children. For some $i>1$, let the $U$ denote the nodes of $C$ in
layer $i$ and $V$ the nodes of $C$ in layer $i+1$.

We define an undirected multigraph $G$ corresponding to layers $i$
and $i+1$ as follows: its vertex set $V(G)$ is $U$. For each gate
$v\in V$ in circuit $C$ that takes inputs $u_1,u_2\in U$ we include
the edge $\{u_1,u_2\}$ in $E(G)$. Notice that if $u_1=u_2$ we add a
self-loop to $E(G)$. Furthermore, if $v\in V$ takes one input as a
$u\in U$ and the other inputs is a constant or a variable, then too we
add a self-loop at vertex $u$. Finally, if both inputs to $v$ are
constants and/or variables, there is no edge in $G$ corresponding to
$v$.  We note some properties of this graph $G$.

\begin{enumerate}
\item We have $|V(G)|\leq w$ and $|E(G)|+|V'|\leq w$, where $V'$ is
	the set of those nodes in $V$ that take only constants and/or
	variables as input.

\item Each vertex $u\in V(G)$ corresponds to a polynomial $p_u$
  computed at $u$ in the $i^{th}$ layer. Each edge $e\in E(G)$ is
  defined by some $v\in V$ and it corresponds to the polynomial $q_e$
  computed at $v$. In order to compute the polynomial corresponding to
  $e$ we need the polynomials corresponding to its end points.
\end{enumerate}

We have $w+1$ registers, $w$ of which contain the polynomials $p_u,
u\in U$. Our goal is to compute the polynomials $q_e, e\in E(G)$ using
these registers. Using the graph structure of $G$, we will give an
ordering of the edges $E(G)$. If we compute the polynomials $q_e$ in
that order then for every $q_e$ computed we will have a free register
to store $q_e$ (when we do not need a polynomial $p_u$ for further
computation, we can free the register containing $p_u$).

Thus, what we want to do is compute an ordering of the edges
$E(G)$\footnote{We can blur the distinction between vertices and edges
  and the polynomials they represent.} from the vertex set $V(G)$. 

We pick edges from $E(G)$ one by one. When $e\in E(G)$ is picked, we
delete $e$ from the graph and store $q_e$ in a free register.
Crucially, note that when a vertex $u\in V(G)$ becomes isolated in
this process the polynomial $p_u$ is not required for further
computation and the register containing $p_u$ is freed. Thus, at any
point of time in this edge-deletion procedure, the number of registers
required is equal to the sum of the number of edges removed from $G$
and the number of \emph{non-isolated} vertices left in $G$.

The edge picking procedure works as follows. We break $G$ into its
connected components $G_1\cup G_2\cup\ldots\cup G_s\cup
G_{s+1}\cup\ldots\cup G_{s+t}$, where $G_1,G_2,\ldots,G_s$ are the
\emph{acyclic} components and $G_{s+1},\ldots,G_{s+t}$ have cycles. We
first compute the edges of $G_1$, and then those of $G_2$, and so on.
At the end, we compute the polynomials corresponding to the nodes in
$V'$.

Each connected component $G_i$ is processed as follows: if there is an
edge $e$ in $G_i$ that is not a cut edge, we pick the edge $e$ and
delete it from the graph; otherwise, since every edge of $G_i$ is a
cut edge, $G_i$ must be a tree, and in this case, we remove any edge
$e$ that is incident to a degree-$1$ vertex. Proceeding thus, we
maintain the invariant that at all points, all but one of the
components of $G_i$ are isolated vertices. We can use this to show
that the number of registers required at any point in the computation
of $q_e$ for $e\in E(G_i)$ is at most $|E(G_i)|+1$ (in particular, if
$G_i$ is acyclic this is at most $|V(G_i)|$).

Putting it all together, we can also show that the maximum number of
nodes used in computing the edges of $G$ is bounded by
$\max\{|V(G)|,|E(G)|+1,|E(G)|+|V'|\}\leq w+1$. Moreover, since at
each step the polynomial of some node $v\in V$ is computed, the total
number of steps in the straight-line program is at most $w$. This
proves the lemma.  
\vspace*{10pt}

\noindent
{\bf Proof of Proposition \ref{prop_ki_lbound}}

The proof follows the same lines as that of \cite[Theorem
4.1]{KI03}. A similar result for \emph{bounded-depth} circuits is
noted in \cite[Section 5]{DSY}. We give a brief proof sketch. Assume
to the contrary that all three statements hold. Following the proof in
\cite{KI03}, $\NEXP$ will collapse to $\NP^{\textrm{Perm}}$. Hence, it
suffices to show $\P^{\textrm{Perm}} \subseteq \NSUBEXP$ to derive a
contradiction (to the nondeterministic time hierarchy theorem). The
language $\textrm{Perm}$ consists of all tuples $(M,v)$, where $M$ is
an integer matrix and $v$ is the binary encoding of
$\textrm{Perm}(M)$. The NSUBEXP machine will guess a polynomial size,
width-$w$ circuit $C$ for the $n\times n$ Permanent polynomial over
variables $\{x_{ij}| 1\leq i,j \leq n\}$. Next, we want to check whether
$C$ indeed computes the Permanent polynomial. We can easily obtain a
width-$w$ polynomial-sized circuit $C_k$ that computes the permanent
of $k \times k$ matrix over variables $\{x_{ij}| 1\leq i,j \leq k\}$ from
circuit $C$. Next we check whether $B_1=C_1(x)-x \equiv 0$. For $n\geq
k>1$ check that $B_k= C_k(X^{(k)})- \sum_{i=1}^k
x_{1,i}C_{k-1}(X_i^{(k)}) \equiv 0$, where $X^{(k)}=(x_{i,j})_{i,k \in
  [k]}$ is the $k \times k$ matrix and $X_i^{(k)}$ is a minor obtained
by deleting first row and $i^{th}$ column of $X^{(k)}$. It follows
that if all the $B_i$'s are identically zero polynomials then $C$
computes the Permanent polynomial. Since $C_k$ has a width-$w$
polynomial size circuit it follows that $B_k$ can be computed by a
polynomial-size width $w+O(1)$ circuit. We can now use the assumed
deterministic subexponential time algorithm for identity testing of
bounded width circuits to check whether each $B_k$ is identically
zero. Putting it together, we have $\P^{\textrm{Perm}} \subseteq
\NSUBEXP$.

\end{document}